\documentclass[pra,aps,reprint,superscriptaddress]{revtex4-2} 
\newif\ifarxiv 
\usepackage{natbib}
\usepackage{graphicx}
\usepackage{float}
\usepackage{amsfonts,amstext}
\usepackage{url}
\usepackage{xcolor}
\usepackage{braket}
\usepackage{soul}
\usepackage{amsthm}
\usepackage{physics}
\usepackage{hhline}
\usepackage{bbm}
\usepackage{enumerate}
\usepackage{comment}
\usepackage{colortbl}
\usepackage{amssymb}
\usepackage{comment}
\usepackage{float}
\usepackage{placeins}
\usepackage[colorlinks=true,linkcolor=blue,citecolor=blue,urlcolor=blue]{hyperref}
\usepackage{cleveref}

\renewcommand{\emph}[1]{\textit{#1}}

\newtheoremstyle{theorem}
	{6pt}
	{}
	{\itshape}
	{}
	{\bfseries}
	{:}
	{.5em}
	{}
\theoremstyle{theorem}

\newtheorem{lem}{Lemma}[section]

\newtheorem{prop}{Proposition}[section]

\begin{document}

\title{Multi-copy and Catalytic Superactivation of Genuine Multipartite Nonlocality}

\author{Bora Ulu}
\affiliation{Department of Applied Physics, University of Geneva, Switzerland}

\author{Mirjam Weilenmann}
\affiliation{Inria, Télécom Paris - LTCI, Institut Polytechnique de Paris, France}

\author{Nicolas Brunner}
\affiliation{Department of Applied Physics, University of Geneva, Switzerland}

\date{\today}

\begin{abstract}
Genuine multipartite Bell nonlocality (GMNL) aims at capturing correlations between many distant observers that are globally Bell nonlocal. A commonly used definition for GMNL is based on biseparable models constructed from non-signalling correlations. Here we uncover an effect of superactivation of GMNL in this framework. Specifically, by locally wiring together two copies of a biseparable correlation, i.e. not GMNL, one can obtain another correlation that is GMNL. We show that this is possible for any number of parties, and for quantum-realizable correlations. Finally, we show that superactivation of GMNL is also possible at the single-copy level, via a catalytic protocol involving only local wirings. These results question the operational meaning of this definition of GMNL. 
   
\end{abstract}

\maketitle

\section{Introduction}

Quantum physics allows for correlations that are incompatible with a natural notion of locality, as formalized by Bell. These phenomena are particularly interesting in the multipartite case, where a number of distant observers (three or more) perform local measurements on a shared entangled state. Similary to entanglement, there exist many different notions of multipartite Bell nonlocality, the strongest of which is genuine multipartite nonlocality (GMNL). Beyond the conceptual interest, these ideas have played a significant role in quantum information, notably in quantum communication, see e.g. \cite{Scarani2001,Ribeiro2018,Holz2020,Grasselli_2023}, as well as for the device-independent detection of genuine multipartite entanglement \cite{bancal2011_diew}.

The concept of GMNL was first introduced by Svetlichny~\cite{svetlichny87}, for capturing multipartite quantum correlations that are globally (and irreducibly) Bell nonlocal. Specifically, these correlations cannot be accounted for by any biseparable local model, where the parties can join into two groups. Various families of Bell inequalities have been introduced for certifying GMNL \cite{Seevinck2002,Collins2002,Bancal2011}, some of which have been tested experimentally \cite{Lavoie2009}.

In turn, it was realized that Svetlichny's original definition of GMNL leads to some inconsistencies when considering multipartite nonlocality from an operational (or resource theoretic) perspective \cite{Gallego2012}. In a nutshell, in Svetlichny's definition of biseparable models, parties grouping together can produce arbitrary distributions, featuring signalling in multiple directions at once, which can lead to grand-father type paradoxes \cite{Bancal2013}. 

This motivated altenative (and refined) defintions of GMNL, where signaling properties in biseparable models are limited (e.g. imposed by a time ordering between the parties) \cite{Gallego2012,Bancal2013}, which avoids the previously mentioned paradoxes. An even more conservative definition was proposed assuming that biseparable models involve only non-signaling distributions~\cite{Bancal2013}. This last definition appears to be by now largely accepted as a consistent definition of GMNL and is commonly used in the literature, see e.g. \cite{Baccari2019,Horodecki2019,Curchod_2019,Contreras_Tejada_2021}.

In this work, we revisit these ideas, and present results that question the operational meaningfullness of the currently used definition of GMNL. Specifically we exhibit a phenomenon of superactivation of GMNL; by combining several copies of biseparable distributions, it is possible to obtain a new distribution that is GMNL. Importantly, the combination of the copies is performed exclusively via local and classical operations, namely local wirings. This implies that the definition of biseparability is not stable ``under tensorisation'' (i.e. taking several copies), which would be a desirable property from an operational/resource-theoretic point of view. Notably, this is in stark contrast to the standard notion of Bell locality, which is tensor stable under local operations, both in the bipartite and multipartite case; combining any number of copies of local distributions via local wirings always leads to a final distribution that remains local.

After presenting the relevant notions and notations, we first provide a simple and illustrative example of superactivation of GMNL via local wirings, considering nonlocal boxes in a tripartite setting. While this initial example involves distributions that are not realizable in quantum theory (but are nevertheless non-signaling), we show that superactivation of GMNL via local wirings is also possible for quantum distributions, constructing explicit examples for any number of parties $N \geq 3$. Finally, we show that superactivation of GMNL is also possible in the single-copy regime, considering a catalytic scenario. Here an initial biseparable distribution is transformed with the help of a shared catalyst (in the form of another distribution) and local operations, into a distribution that is GMNL.

Our work complements previous ones demonstrating superactivation of nonlocality at the level of quantum states \cite{palazuelos2012,Cavalcanti2013,Quintino2016,Contreras_Tejada_2021,miethlinger2026superactivationgenuinemultipartitebell}; starting from some entangled state admitting a local hidden variable model, Bell inequality violation can be obtained by performing joint measurements on many copies of the state. In contrast, our work operates at the level of probability distributions. In this sense our results concern the model-independent definition of GMNL. As mentioned above, the superactivation effect shows that the current definition of GMNL is not fully satisfactory from an operational perspective. In the conclusion we discuss a possible alternative definition, recently developed in a series of works \cite{Navascues2020,kraft2021,CoiteuxRoy2021,Coiteux2021pra}, inspired by the field of network Bell nonlocality \cite{Fritz2012, Tavakoli2022}.

\section{Preliminaries}

\subsection{Genuine Multipartite Non-locality} \label{sec:preliminaries}

\begin{figure}[t!]
    \centering
    \includegraphics[width=1.0\columnwidth]{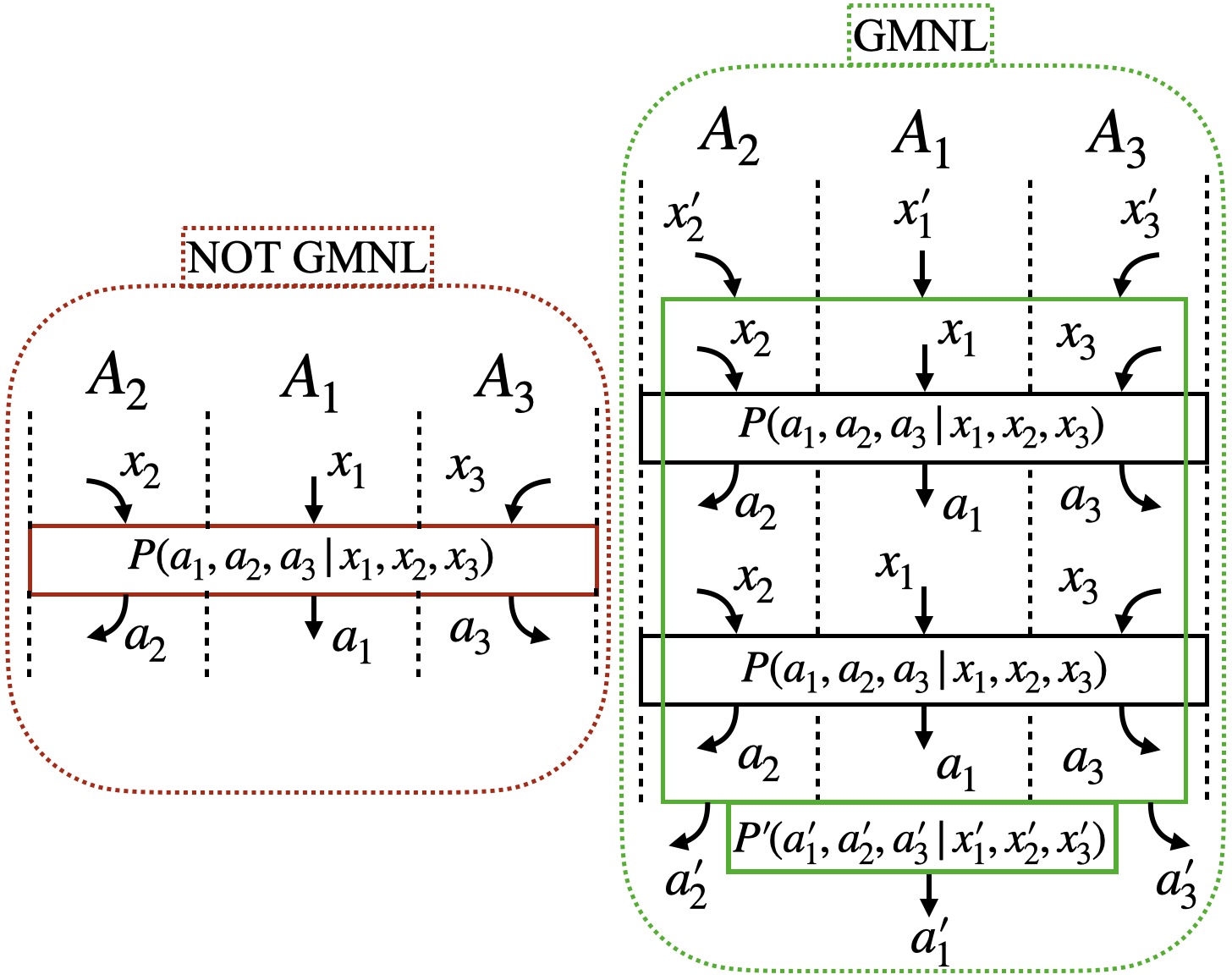}
    \caption{Illustrative example of multi-copy superactivation of GMNL. The three parties $A_1,A_2$ and $A_3$ begin with a box $P$ with local inputs $x_1,x_2,x_3$ and outputs $a_1,a_2,a_3$ that is biseparable (not GMNL). After wiring two copies of this box $P$, i.e. using local classical operations on their local inputs and outputs, they obtain another box $P'$ that is GMNL. }
    \label{fig1}
\end{figure}

Consider a Bell scenario with $K$ parties $\{A_k\}_{k=1}^{K}$ and with the resulting conditional distribution $P(a_1,...,a_K|x_1,...,x_K)$ where for $i\in\{1,2,...,K\}$, the inputs of the parties are given by $x_i\in \mathcal{X}_i$ and similarly the outputs are given by $a_i\in \mathcal{A}_i$ \footnote{Here, $ \mathcal{X}_i$ and $\mathcal{A}_i$ are considered to be finite sets.}. Then, if

\begin{align}\nonumber
    P(a_1,...,a_K|x_1,...,x_K) = \sum_{\lambda} q(\lambda) \prod_{k=1}^K P(a_k|x_k,\lambda)
\end{align}
where $0\leq q(\lambda)\leq1$ and $\sum_\lambda q(\lambda) = 1$, we say that $P$ belongs to the set of $K$-partite local distributions $\mathcal{L}_K$, otherwise it is $K$-party non-local. Local distributions can be expressed as convex combinations of local deterministic distributions. For a given  local deterministic distribution $P(\vec a|\vec x)$ where $\vec a = (a_1,...,a_K)$ and $\vec x =(x_1,...,x_K)$ there always exist local response functions $f_k:\mathcal{X}_i \rightarrow \mathcal{A}_i$ from which we can obtain $P(\vec a |\vec x)=\prod_{k=1}^K\delta_{a_k,f_k(x_k)}$. For brevity, we can express $P(\vec a |\vec x)=\delta_{\vec a,f(\vec x)}$ where $f(\vec x) = (f_1(x_1),...,f_K(x_K))$. 

We further define the set $\mathcal{NS}_{K}$ of $K$-partite non-signalling correlations: Given the set
\begin{equation}\nonumber
\mathcal{B} = \{\alpha \subset \{1,...,K\}:1\in\alpha, \alpha \neq \{1,...,K\}\}
\end{equation}
a $K$-partite distribution $P$ is non-signalling if it satisfies for all $\alpha\in \mathcal{B}$ and its complement $\bar \alpha = \{1,...,K\}\setminus \alpha$,
\begin{equation}\nonumber
    \sum_{\vec{a}_\alpha}P(\vec{a}_\alpha,\vec{a}_{\bar \alpha}|\vec{x}_\alpha,\vec{x}_{\bar \alpha}) = \sum_{\vec{a}_\alpha}P(\vec{a}_\alpha,\vec{a}_{\bar \alpha}|\vec{x}_\alpha^\prime,\vec{x}_{\bar \alpha})
\end{equation}
where $\vec{x}_{ \alpha} = (x_i)_{i\in \alpha}$, $\vec{x}'_{ \alpha} = (x'_i)_{i\in \alpha}$ and $\vec{a}_{ \alpha} = (a_i)_{i\in \alpha}$. This captures the idea that the inputs of any subset of parties cannot influence the outputs of the remaining parties, which compatibility with special relativity requires if the $K$ parties are space like separated. Notice that quantum correlations have this property, i.e., the sets of $K$-partite quantum correlations are a subset of $\mathcal{NS}_{K}$.

Let us know move to the concept of GMNL, first introduced by Svetlichny~\cite{svetlichny87}. Our focus is on a refined definition proposed in \cite{Bancal2013}, which enforces the non-signaling constraint among all subsets of parties. Formally, we first need to introduce the subset of correlations that are $K$-partite biseparable no-signalling, $\mathcal{N}_{K}^2$,  which is the set of $P$ that allow for a decomposition
\begin{align}\label{gmnl_set}\nonumber
    &P(a_1,...,a_K|x_1,...,x_K) =\\& \sum_{\alpha\in\mathcal{B}}\sum_{\lambda} q_\alpha(\lambda)P^{|\alpha|}_{NS}(\vec{a}_{\alpha} |\vec{x}_{ \alpha},\lambda)P_{NS}^{K-|\alpha|}(\vec a_{\bar \alpha}|\vec x_{\bar \alpha},\lambda)
\end{align}
over the hidden variable $\lambda$ where $0\leq q_\alpha(\lambda)\leq 1$ as well as $\sum_{\alpha\in\mathcal{B}}\sum_{\lambda}q_\alpha(\lambda)=1$ and $P^{k}_{NS} \in \mathcal{NS}_k \forall k$ can be any non-signalling distribution. Notice that $\mathcal{L}_K\subset\mathcal{N}^2_{K}  \subset \mathcal{NS}_{K}$. Similarly, we define the more restrictive set of $K$-partite $(K-1)$-separable no-signalling distributions $\mathcal{N}_K^{K-1}$, which is the set of $P$ that allows for a decomposition
\begin{align}\nonumber
    &P(a_1,...,a_K|x_1,...,x_K) = \\&\sum_{\alpha\in\mathcal{B}:|\alpha|=2}\sum_{\lambda}q_{\alpha}(\lambda)P^2_{NS}(\vec a_\alpha|\vec x_\alpha,\lambda)\delta_{\vec a_{\bar \alpha},f^\lambda_{\bar \alpha}(\vec x_{\bar \alpha})}
\end{align}
where $f^\lambda_{\bar \alpha}$ are local response functions for all $\lambda$ and $\bar \alpha$ and $\sum_{\alpha\in\mathcal{B}:|\alpha|=2}\sum_{\lambda}q_\alpha(\lambda)=1$. Any $(K-1)$-separable distribution is by definition also biseparable and thus
\begin{align}
\mathcal{L}_K\subset\mathcal{N}_K^{K-1}\subset \mathcal{N}^2_{K}  \subset \mathcal{NS}_{K}.
\end{align}
For a distribution to be $K$-party GMNL,
 the distribution $P \in \mathcal{NS}_{K}$ must be nonlocal and must not accept any convex decomposition over bipartitions of $(k<K)$-partite nonlocal distributions, i.e., $P \notin \mathcal{N}^2_K$. Note that, similarly to $\mathcal{L}_K$, the sets $\mathcal{N}^2_K$ and $\mathcal{N}_K^{K-1}$ are also polytopes. The facets of theses polytope are linear Bell-type inequalities. A violation of one of the inequalities (facets) of $\mathcal{N}^2_K$ implies the presence of GMNL. For the simplest case of $K=3$ and binary inputs and outputs, all such GMNL inequalities have been derived in \cite{Bancal2013} using linear programming techniques, and some of these inequalities can be violated via quantum distributions.

\subsection{Wirings}

In this work, we are interested in a scenario where the parties have access to several copies of some input/output ``nonlocal box'', represented by a conditional distribution $P(a_1,...,a_k|x_1,...,x_k)$. We will say that a nonlocal box is GNML or $(K-1)$-separable if the distribution that is representing it has the respective property.  Each of the $K$ parties is then allowed to classically process their local inputs and outcomes, while any form of communication with other parties is prohibited.

The operations that each party can perform on these boxes are then naturally captured by means of (classical) wirings, where e.g. the output of the first box can be used to determine the input to the second box and so on. Wirings arguably represent the minimal class of local operations that must be considered in any operational/resource-theoretic approach to Bell nonlocality \cite{Barrett_2005}. Interestingly, wirings can be used to increase the amount of nonlocality of a given nonlocal box (by wiring several copies of the box), a process termed nonlocality distillation \cite{Forster_2009,Brunner_2009,Hoyer_2010, Brito_2019,Eftaxias_2023,Naik_2023,Hoyer_2026}, and for activating key in device-independent key distribution \cite{Ulu_2025}.

In this work, we will consider a subset of wirings called ordered wirings. 
Consider two boxes $P_1(a_{(1)1},...,a_{(1)K}|x_{(1)1},...,x_{(1)K})$ and $P_2(a_{(2)1},...,a_{(2)K}|x_{(2)1},...,x_{(2)K})$ such that each party $A_k$ has access to the classical bits $(a_{(1)k},a_{(2)k},x_{(1)k},x_{(2)k})$. In an ordered wiring, for each input $x_k$ the party $A_k$ chooses an input for their first box $x_{(1)k}$ (which may depend on $x_k$). They then choose the input for their second box $x_{(2)k}$ which may depend on $(a_{(1)k},x_{(1)k},x_k)$. Finally, based on  $(a_{(1)k},a_{(2)k},x_{(1)k},x_{(2)k},x_k)$ they choose the final outcome $a_k$. All parties follow a similar procedure on their parts of the boxes and together this leads to a new box $P^{\prime}(a_1,...,a_K|x_1,...,x_K)$. This local processing implemented by the parties can be represented as a map $\mathcal{F}$, which takes the two initial boxes $P_1$ and $P_2$ to the final nonlocal box $P'=\mathcal{F}(P_1,P_2)$. 

Here we are mostly interested in the case where the two initial boxes are copies of the same box $P=P_1=P_2$. In particular we show that there exist boxes that are biseparable, yet the final wired box $P^\prime$ becomes GMNL, i.e. is provably no longer biseparable. This effect is termed superactivation of GMNL.

\section{Illustrative Example of Superactivation of GMNL}

In this article, we are interested in whether we can start from resources without GNML and create it by simple local operations. 
 Specifically, if $K$-parties start with resources that are not GMNL and after local operations on multiple copies of the same resource can they obtain another resource that is GMNL? We call this super-activation of GMNL. We will consider super-activation at the level of correlations or boxes, meaning that we make no assumptions on the implementation of quantum states and measurements that generate it. Before focusing on the quantum examples later, we demonstrate in this section  the existence of multi-copy GMNL superactivation by means of a simpler, post-quantum example.

Let us consider the tripartite Bell scenario ($K=3$) with 2 inputs and 2 outputs for each party. We construct a tripartite distribution that by combining a bipartite nonlocal box with a single-party deterministic box. For the bipartite distribution we consider the Popescu-Rohrlich distribution (so-called PR-box)~\cite{tsirelson, PRbox}, which is an extremal point of the bipartite non-signalling polytope. It is characterized by
\begin{align}\label{PR}
    P_{\textrm{PR}}(a_1,a_2|x_1,x_2) = \begin{cases}
1/2, & \text{ if } a_1\oplus a_2=x_1x_2.\\
0, & \text{ otherwise}
\end{cases}
\end{align}
For the deterministic single-party distribution, we take $P_{\mathrm{det}}(a_k|x_k)=\delta_{0,a_k}$ that outputs $0$. Combining these objects, we construct the following tripartite distribution
\begin{align}\nonumber
    P(a_1,a_2,a_3|x_1,x_2,x_3)& = \frac{1}{2}P_{\textrm{PR}}(a_1,a_2|x_1,x_2)\delta_{0,a_3} \\&+ \frac{1}{2}P_{\textrm{PR}}(a_1,a_3|x_1,x_3)\delta_{0,a_2}.\label{simple_ex}
\end{align}
Clearly, this distribution is biseparable, by construction and belongs to $\mathcal{N}^2_3$.

Now, consider a scenario where two copies of the above distribution, i.e. $P(a_{(1)1},a_{(1)2},a_{(1)3}|x_{(1)1},x_{(1)2},x_{(1)3})$ and $P(a_{(2)1},a_{(2)2},a_{(2)3}|x_{(2)1},x_{(2)2},x_{(2)3})$ are wired together using the wiring $\mathcal{V}[P,P]=P'(a_1',a_2',a_3'|x_1',x_2',x_3')$ defined by the relations
\begin{align}
    x_{(1)k}=x_{(2)k}=x_k',\quad a_k' = a_{(1)k}\oplus a_{(2)k} \label{wiring}
\end{align}
for all $k\in\{1,2,3\}$. Then, the distribution after this wiring becomes
\begin{align}
    P'(a_1',a_2',a_3'|x_1',x_2',x_3') &= \frac{1}{4}\mathcal{V}[P_1,P_1] + \frac{1}{4}\mathcal{V}[P_2,P_2] \nonumber \\&+ \frac{1}{4}\mathcal{V}[P_1,P_2] + \frac{1}{4}\mathcal{V}[P_2,P_1]
\end{align}
where $P_1=P_{\textrm{PR}}(a_1,a_2|x_1,x_2)\delta_{0,a_3}$ and $P_2 = P_{\textrm{PR}}(a_1,a_3|x_1,x_3)\delta_{0,a_2}$. Each of these elements appear as follows: For $\mathcal{V}[P_1,P_1]$ the wiring states $a_1' = a_{(1)1}\oplus a_{(2)1}$ and $a_2' = a_{(1)2}\oplus a_{(2)2}$ therefore, $a_1'\oplus a_2' = (a_{(1)1}\oplus a_{(1)2})\oplus (a_{(2)1}\oplus a_{(2)2})$. From the distribution $P_1$ we know that $a_{(1)1}\oplus a_{(1)2} = x_{(1)1}x_{(1)2} = x_1'x_2'$ and $a_{(2)1}\oplus a_{(2)2} =  x_1'x_2'$ and hence $a_1'\oplus a_2'=0$. In addition, $a_3'=0$, which implies that $\mathcal{V}[P_1,P_1] = P_C(a_1',a_2'|x_1',x_2')\delta_{0,a_3'}$ where $P_C(a_1',a_2'|x_1',x_2') = \delta_{a_1',a_2'}/2$ is the perfectly correlated distribution for all inputs and outputs. A similar procedure reveals that $\mathcal{V}[P_2,P_2]=P_C(a_1',a_3'|x_1',x_3') \delta_{0,a_2'}$. 

\begin{figure}[t!]
    \centering
    \includegraphics[width=1.0\columnwidth]{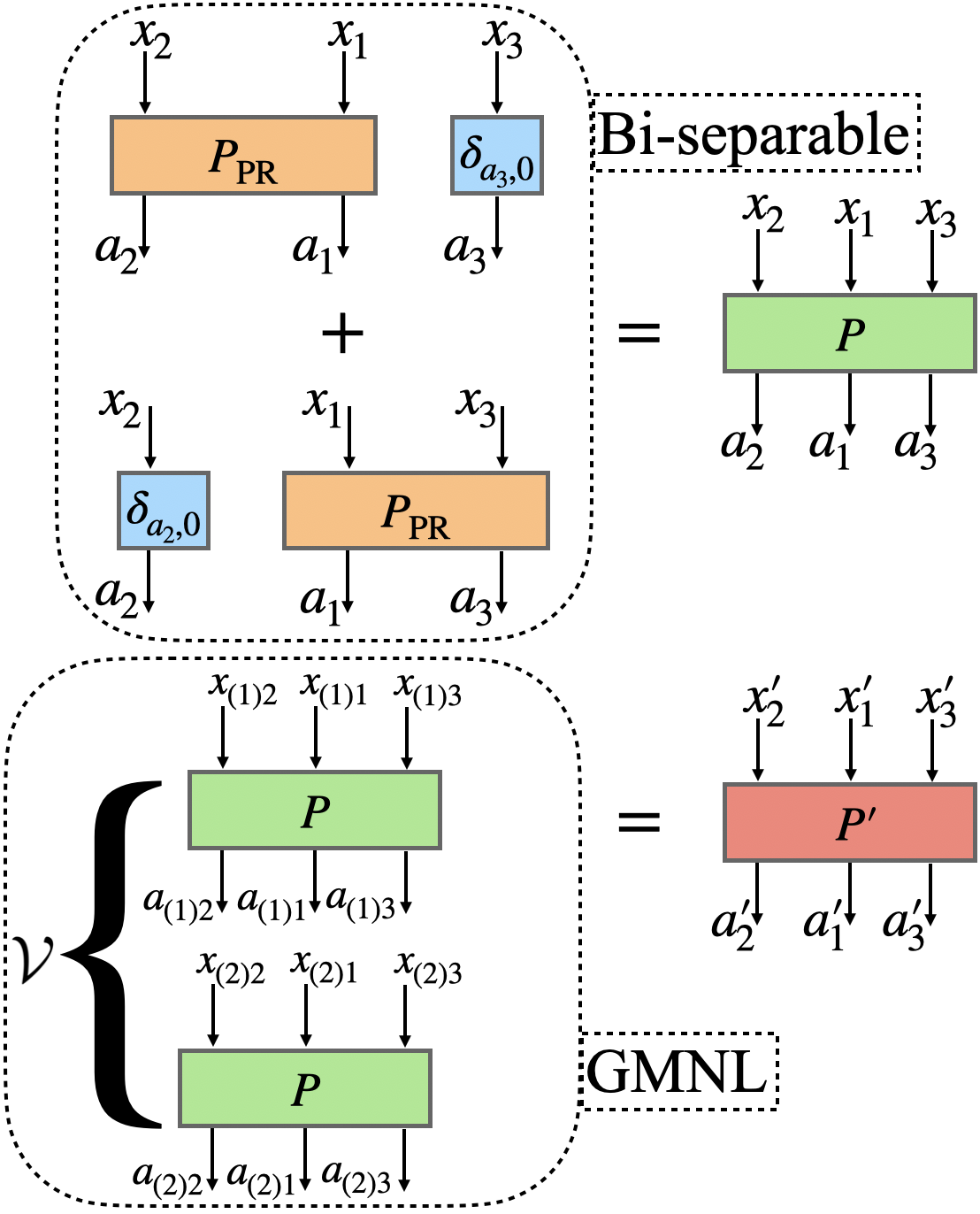}
    \caption{Illustrative example of superactivation of GMNL. In the top left we have the biseparable mixture $P$ as in \eqref{simple_ex} and GMNL is activated using the wiring $\mathcal{V}[P,P]=P'$ defined in \eqref{wiring}.}
    \label{fig5}
\end{figure}

Now we look at the cross terms, first notice that $\mathcal{V}[P_1,P_2]=\mathcal{V}[P_2,P_1]$ due to the symmetry of the wiring under exchange of the rounds. For $\mathcal{V}[P_1,P_2]$ we have $a_2' = a_{(1)2}\oplus a_{(2)2} = a_{(1)2}$ and similarly $a_3' = a_{(2)3}$. Furthermore, $a_1' = a_{(1)1}\oplus a_{(2)1}$ and using  \eqref{PR}, which together with \eqref{wiring} implies $a_{(1)1} = x_1'x_2' \oplus a_{(1)2}$ and $a_{(2)1} = x_1'x_3' \oplus a_{(2)3}$, we get $a'_1 = x_1'x_2'\oplus x_1'x_3' \oplus a_2'\oplus a_3'$. Therefore, we find that $\mathcal{V}[P_1,P_2] = P_{\mathrm{GMNL}}(a_1',a_2',a_3'|x_1',x_2',x_3')$ where
\begin{align}\nonumber
    P_{\textrm{GMNL}}(a_1',&a_2',a_3'|x_1',x_2',x_3') \\&= \begin{cases}
1/4, & \text{ if } a_1'\oplus a_2'\oplus a_3'=x_1'(x_2'\oplus x_3'),\\
0, & \text{ otherwise}
\end{cases} \, .
\end{align}
This is an extremal non-signalling tripartite box that is not biseparable \cite{Barrett_2005,Pironio_2011}. 

To show that the overall distribution 
\begin{align}\nonumber
    P'(a_1',a_2',a_3'|x_1',x_2',x_3') &= \frac{1}{4}P_C(a_1',a_2'|x_1',x_2')\delta_{0,a_3'}\\&\nonumber + \frac{1}{4}P_C(a_1',a_3'|x_1',x_3')\delta_{0,a_2'}
    \\& + \frac{1}{2}P_{\mathrm{GMNL}}(a_1',a_2',a_3'|x_1',x_2',x_3').\label{Eq: finalbox}
\end{align}
is GMNL, we use an inequality derived in Ref. \cite{Bancal2013}, which is a facet of the polytope $\mathcal{N}^2_3$ for binary inputs and outcomes, hence satisfied by all biseparable distributions. Specifically, we use a relabeling of the inequality of class 169, denoted $I_{\mathrm{GMNL}}^{169}(P) \leq 0$; for the exact form of the inequality, see Appendix-\ref{app:explicit_ineq}.
The first two distributions appearing in the convex mixture saturate the biseparable bound, i.e. $I_{\mathrm{GMNL}}^{169}(P_C\delta)=0$, while $P_{\mathrm{GMNL}}$ violates it by $I_{\mathrm{GMNL}}^{169}(P_{\mathrm{GMNL}})=12$. Therefore, altogether we get 
\begin{equation}
I_{\mathrm{GMNL}}^{169}(P')=6>0
\end{equation}
which demonstrates that the wired box $P'$ in Eq. \eqref{Eq: finalbox} is GMNL. We conclude that GMNL can indeed be activated in the many-copy scenario: by writing two copies of a biseparable non-signalling box $P$, we obtain a wired box $P'$ that violates a witness for GMNL. 

From the above example, it is natural to ask a number of questions. First, one may wonder whether GMNL superactivation is possible for distributions that admit a quantum model? Second, is GMNL possible for distributions with any number of parties? In the following section, we answer both questions in the affirmative, by constructing a family of $N$-party quantum distributions that are biseparable, yet become GMNL when several copies are combined via a judiciously chosen wiring.

Note that we have considered adapting the above example of GMNL superactivation to the quantum case, specifically, by mixing the PR-box in distribution \eqref{simple_ex} with a maximally mixed local box (the uniform mixture of all deterministic local boxes) to a point where it admits a quantum realization. However, we could not find any instance of GMNL superactivation in this case. A completely different construction is presented in the next section.

\section{Multi-copy Superactivation of GMNL from quantum boxes}

In this section, we show that superactivation of GMNL is possible for any number of parties, starting from nonlocal boxes that admit a quantum realisation. Consider a scenario involving $K$ distant parties. We construct a family of quantum boxes $P$ which are $(K-1)$-separable, i.e. $P \in \mathcal{N}^{(K-1)}_K$. In turn, we show that by wiring two copies of $P$, we obtain a quantum box $P'$ which is GMNL, i.e. $P' \notin \mathcal{N}^2_K$.

To construct these examples, we reuse the structure of the illustrative example of the previous section. That is, we start from an elementary bipartite nonlocal quantum box, distributed between parties $A_1$ and $A_n$ (with $n=2,...,K$), while the remaining $K-2$ parties have a local deterministic box. In turn, we construct our $K$-party distribution $P$ by symmetrizing the above elementary distribution, by taking the convex mixture over the index $n$. By construction, this box is clearly $(K-1)$-separable, as each term in the mixture features a nonlocal box between only two parties. Finally, we show that by combining two copies of $P$ via an appropriate wiring, we obtain a final box $P'$ which is GMNL. To demonstrate this last statement, we construct a Bell inequality for witnessing GMNL based on the ideas of Ref. \cite{Contreras_Tejada_2021}.

Let us start by first discussing the building block in our construction, namely a family of bipartite quantum distributions arising from well-known Hardy paradox \cite{PhysRevLett.71.1665}. We term these Hardy correlations:
\begin{equation}\label{Eq: Born}
    P_H(a_1a_2|x_1x_2) = \text{Tr}\left[\ketbra{\psi}{\psi} \left(M_{A_1}^{x_1,a_1}\otimes M_{A_2}^{x_2,a_2}\right)\right], 
\end{equation}
where
\begin{equation}
    \ket{\psi} = \frac{\ket{00}+t\ket{11}}{\sqrt{1+t^2}}
\end{equation}
and the measurements 
\begin{align}\nonumber
    M_{A_1}^{x_1,a_1} &= \left[\mathbb{I}+(-1)^{a_1}O^{(x_1)}_1\right]/2\\\nonumber
    M_{A_2}^{x_2,a_2} &= \left[\mathbb{I}+(-1)^{a_2}O^{(x_2)}_2\right]/2
\end{align}
are defined in terms of the
 observables $O^{(x_1)}_1 = \cos(\xi_{x_1})\sigma_z + \sin(\xi_{x_1})\sigma_x$ and $O^{(x_2)}_2 = \cos(\gamma_{x_2})\sigma_z -\sin(\gamma_{x_2})\sigma_x$ where $\sigma_z$ and $\sigma_x$ are the Pauli operators 
\begin{align}\nonumber
    \xi_1 = \gamma_1 = -\arccos\left(\frac{t-1}{t+1}\right),\xi_0 = \gamma_0 = \arccos\left(\frac{t^3-1}{t^3+1}\right).
\end{align} 
The property of the Hardy correlations that will be crucial for this work is that they satisfy $P_H(01|01)=P_H(10|10)=P_H(00|11)=0$ and $P_H(00|00)=p(t)=t^2(1-t^2)^2/(1+t^2)(1+t^3)^2$. Hence, their nonlocality can be witnessed by the violation of the Clauser-Horne (CH) inequality \footnote{Note that the CH inequality is equivalent to the CHSH inequality as $I(P)=(I_{CHSH}(P)-2)/4$ \cite{Brunner_2014}.}
\begin{equation}\nonumber
    I(P)=P(00|00)-P(01|01)-P(10|10)-P(00|11)\leq 0
\end{equation} 
with $I(P_H)=p(t)>0$.

We now move to our scenario of interest, namely a Bell scenario with $K$ parties $\{A_k\}_{k=1}^{K}$ and with the resulting conditional distribution $P(a_1,...,a_K|x_1,...,x_K)$ where for $i\in\{1,2,...,K\}$, the inputs of the parties are given by $x_i\in \mathcal{X}_i$ and similarly the outputs are given by $a_i\in \mathcal{A}_i$. Let one of the parties $A_1$ have $2^{K-1}$ inputs and outputs and let these be expressed as bit strings of the form $x_1=(x_1^2,x_1^3,...,x_1^{K})$ and $a_1=(a_1^2,a_1^3,...,a_1^{K})$ such that $\{0,1\}^{K-1}\subseteq \mathcal{X}_1$ and $\{0,1\}^{K-1}\subseteq \mathcal{A}_1$. Let the rest of the parties $\{A_k\}_{k=2}^{K}$ have inputs and outputs $ \mathcal{X}_k=\{0,1\}$ and $\mathcal{A}_k=\{0,1\} $. Then we can define the following bipartite quantities between $A_1$ and $A_k$ as
\begin{widetext}
\begin{align}
    Q^P_k(u,v\mid p,q)
=
\sum_{\substack{a_1\in\{0,1\}^{K-1}\\ :a_1^k=u}}
P\Big(
a_1, a_k=v, \vec{a}_{\not {k}}=0
 \Big|
x_1=pe_k, x_k=q, \vec{x}_{\not{k}}=0 \Big). \label{marginals}
\end{align}
\end{widetext}
where $\vec{a}_{\not{k}}=(a_2,...,a_{k-1},a_{k+1},...,a_{K})$, $\vec{x}_{\not{k}}=(x_2,...,x_{k-1},x_{k+1},...,x_{K})$ and $e_k\in\{0,1\}^{K-1}$ is the unit bit string that is all zeros except for the $(k-1)$th element. Further, to these quantitites we can apply the CH functional 
\begin{equation}\label{CH_func}
    I(Q^P_k) = Q^P_k(00|00) - Q^P_k(01|01) - Q^P_k(10|10) - Q^P_k(00|11).
\end{equation}

In this work, we rely on the following  $K$-party GMNL inequality from~\cite{Contreras_Tejada_2021}, 
\begin{align}\nonumber
I_K(P)
=&
\sum_{k=2}^{K} I(Q^P_k)
+
P\left(0^{K-1},0,\ldots,0 \middle|\, 0^{K-1},0,\ldots,0\right)
\\& \label{top_ineq} -
\sum_{k=2}^{K} Q^P_k(0,0\mid 0,0)\leq 0 
\end{align}
such that $I_K(P)>0$ certifies that the box $P$ is $K$-party GMNL. For completeness, we provide a proof of this inequality in Appendix~\ref{app:lifting}.

Next, we show that GNML can indeed be super-activated. Specifically, we show that in a $K$-partite  Bell scenario for any $K>2$, starting with multiple copies of a biseparable quantum box that is not GMNL we can apply an ordered wiring of the inputs and outputs to obtain another box in the same scenario that violates the $K$-party GMNL witness in \eqref{top_ineq}, i.e. that is GMNL. This demonstrates that correlations themselves can be manipulated with local deterministic operations to super-activate GMNL. Note that, since we remain in the same Bell scenario, these copies can be regarded as consecutive rounds of the same experiment and thus are easy to implement. This means that rather than needing a genuinely multi-partite process or some complicated bipartite measurement, an experiment that can produce Hardy correlations between pairs of parties (combined with classical processing) is sufficient for establishing GMNL.

\begin{figure}[h!]
    \centering
    \includegraphics[width=1.0\columnwidth]{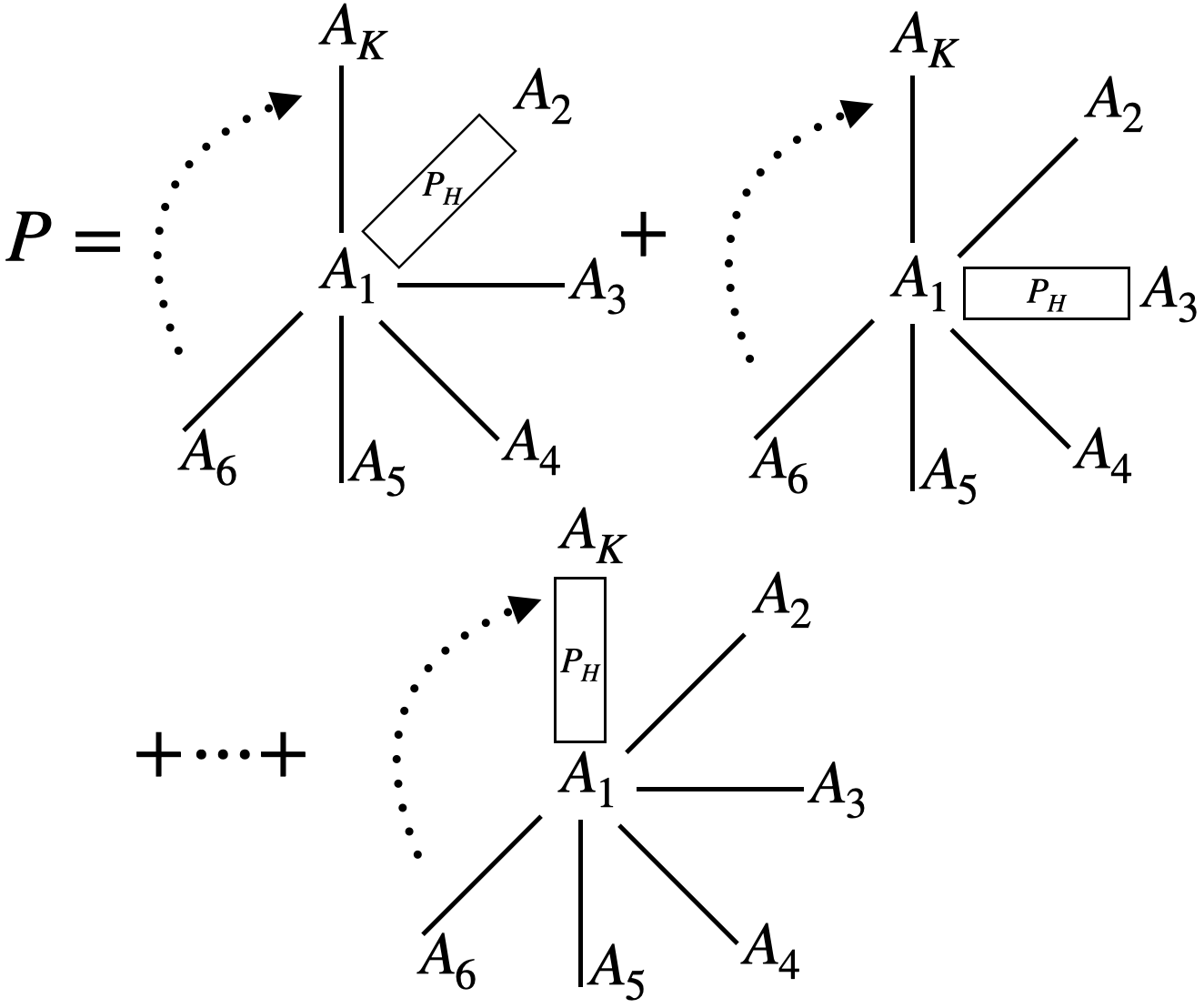}
    \caption{Illustration of the $(K-1)$-separable box in \eqref{hardy_fam}. Solid lines represent local deterministic links and the box $P$ is constructed from the convex mixture of boxes where the party $A_1$ shares hardy correlations $P_H$ with only one other party and local deterministic correlations with the rest. }
    \label{fig4}
\end{figure}

\begin{prop}\label{prop1}
 In a Bell setup with $K$ parties $\{A_k\}_{k=1}^{K}$
 there exists a family of $(K-1)$-separable quantum distributions
\begin{align}\nonumber
    P(a_1,...,a_K|x_1,...,x_K) &= \frac{1}{K-1}\sum_{k=2}^{K}P_H(a_1^k,a_k|x_1^k,x_k)\\&\prod_{m\in\{2,\ldots,K\}\setminus\{k\}}\delta_{1,a_1^{m}}\delta_{1,a_m}\label{hardy_fam}
\end{align}
such that when $K-1$ copies are wired with the AND-gated wiring (see Figure-\ref{fig2} for the $K=3$ example) given by
\begin{align}\label{and_wiring}
&a^{k}_1 = a_{(1)1}^k\land a_{(2)1}^k\land ... \land a_{(K-1)1}^k\\ \nonumber&a_{k} = a_{(1)k}\land a_{(2)k} \land... \land a_{(K-1)k}\\ \nonumber &x_1^{k}=x_{(1)1}^k = x_{(2)1}^k = ... = x_{(K-1)1}^k\\ \nonumber & x_k = x_{(1)k} = x_{(2)k}=... = x_{(K-1)k}.
\end{align}
The resulting distribution $P^\prime(a_1,...,a_K|x_1,...,x_K)$ is $K$-party GMNL.
\end{prop}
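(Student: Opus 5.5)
The plan has two parts: check that $P$ is $(K-1)$-separable, and then show that the wired box $P'$ violates the inequality \eqref{top_ineq}, i.e.\ $I_K(P')>0$.

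\emph{Separability and quantumness.} Each term of \eqref{hardy_fam} is a Hardy box $P_H$ shared by $A_1$ (on its component $k$) and $A_k$. The remaining parties, and the remaining components of $A_1$'s output, are deterministic. So every term is a product of a bipartite quantum box with a local deterministic strategy for the other $K-2$ parties. The uniform mixture over $k$ therefore lies in $\mathcal{N}^{K-1}_K$ and is quantum realisable. The wiring \eqref{and_wiring} is local and deterministic, so $P'$ is again quantum.

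\emph{Structure of $P'$.} In each of the $K-1$ copies a label $k_j\in\{2,\dots,K\}$ is drawn uniformly and independently. In copy $j$, every pair $m\neq k_j$ outputs $a^m_{(j)1}=a_{(j)m}=1$. These are neutral for the AND, so the final pair $(a_1^m,a_m)$ is the AND, over the copies $j$ with $k_j=m$, of the outputs of independent Hardy boxes, all evaluated on the common inputs $(x_1^m,x_m)$. If $m$ is never drawn, then $a_1^m=a_m=1$. The key observation is the following. Any event in which $a_m=0$ for all $m\neq k$ (and $x_m=0$ for $m\ne k$) forces every $m\neq k$ to be drawn at least once. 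With only $K-1$ copies, this leaves at most one copy for $k$.

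\emph{Simplifying $I_K(P')$.} Expanding \eqref{top_ineq} with \eqref{CH_func}, the terms $+Q_k^{P'}(00|00)$ in $\sum_k I(Q_k^{P'})$ cancel exactly against $-\sum_k Q^{P'}_k(00|00)$. This leaves
\begin{align}\nonumber
I_K(P') = P'(0^{K-1},0,\ldots,0|0^{K-1},0,\ldots,0) - \sum_{k=2}^K\big[Q^{P'}_k(01|01)+Q^{P'}_k(10|10)+Q^{P'}_k(00|11)\big].
\end{align}
The main obstacle is showing that each bracketed negative term vanishes. I would argue as follows. In each such term the outputs $a_m$ for $m\neq k$ are $0$, so by the observation above $k$ is drawn at most once. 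If $k$ is never drawn, then $a_1^k=a_k=1$, which is incompatible with outcomes $01$, $10$ or $00$. If $k$ is drawn exactly once, then $(a_1^k,a_k)$ is the output of a single Hardy box. Its probabilities at these outcome/input pairs are $P_H(01|01)=P_H(10|10)=P_H(00|11)=0$.

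\emph{Positivity.} Finally, the all-zero event requires every pair to be drawn. With $K-1$ copies this means the labels are a permutation of $\{2,\dots,K\}$, which happens with probability $(K-1)!/(K-1)^{K-1}$. In that case each pair outputs $00$ on inputs $00$ independently, each with probability $p(t)$. Hence
\begin{align}\nonumber
I_K(P') = \frac{(K-1)!}{(K-1)^{K-1}}\,p(t)^{K-1} > 0 \quad \text{for } 0<t<1 ,
\end{align}
so $P'$ is $K$-party GMNL by \eqref{top_ineq}.
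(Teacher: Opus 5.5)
Your proof is correct and follows essentially the paper's argument. The paper (Appendix~\ref{app:prop1}) also expands $P'$ over label vectors $\vec k\in\{2,\dots,K\}^{K-1}$: Lemma~\ref{lem_B1} shows that every non-permutation vector contributes zero (some pair stays deterministically at $1$), and Lemma~\ref{lem_B2} shows that each of the $(K-1)!$ permutations contributes $p(t)^{K-1}$, giving the same value $\frac{(K-1)!}{(K-1)^{K-1}}p(t)^{K-1}$. The difference is only in organisation: you cancel the $Q_k^{P'}(00|00)$ terms in the witness first and remove the remaining negative terms by a pigeonhole count, rather than evaluating $I_K$ component by component, and you also spell out the $(K-1)$-separability and quantumness that the paper treats as evident by construction.
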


Here we provide the proof for the simplest case with $K=3$; the general case is given in Appendix~\ref{app:prop1}.

\begin{proof}[Proof for K=3]
 Party $A_1$ has 4 inputs and outputs while the other two parties have 2 inputs and outputs each. The initial biseparable quantum distribution is given by
\begin{align}\label{ex1}
    \nonumber P(a_1,a_2,a_3|x_1,x_2,x_3) &= \frac{1}{2}P_1(a_1,a_2,a_3|x_1,x_2,x_3)\\
    &+\frac{1}{2}P_2(a_1,a_2,a_3|x_1,x_2,x_3)
\end{align}
where $P_1=P_H(a_1^2,a_2|x^2_1,x_2)\delta_{1,a_1^{3}}\delta_{1,a_3}$ and $P_2=P_H(a_1^3,a_3|x^3_1,x_3)\delta_{1,a_1^{2}}\delta_{1,a_2}$. In this case, the witness~\eqref{top_ineq} is given by
\begin{align}\label{ex_ineq}
    I_3(P) = &I(Q^{P}_2) + I(Q^P_3)+P(00,0,0|00,0,0)\\&-Q^P_2(0,0|0,0)-Q^P_3(0,0|0,0)\leq 0. \nonumber
\end{align}

\begin{figure}[t]
    \centering
    \includegraphics[width=1.0\columnwidth]{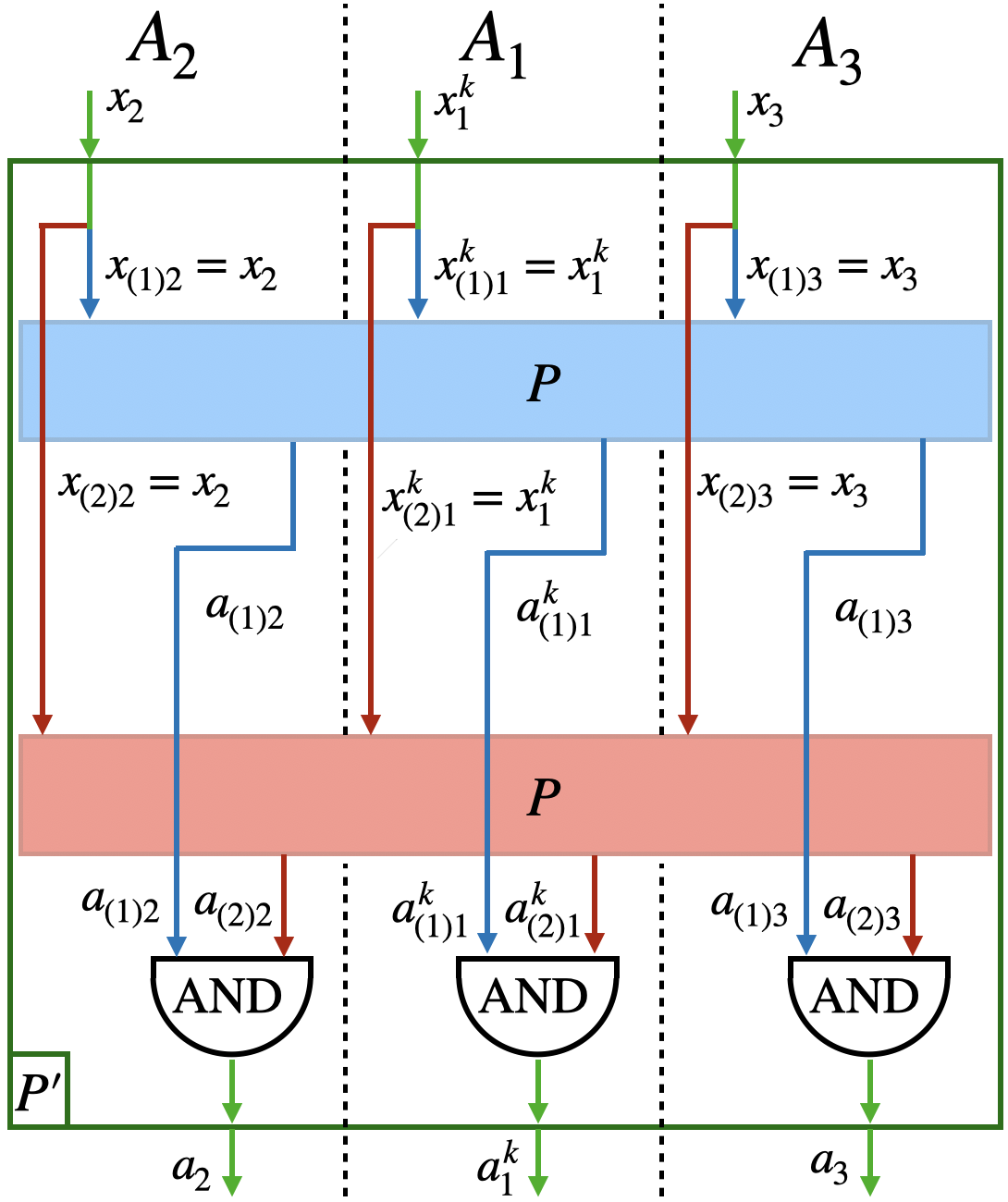}
    \caption{Operational depiction of the AND wiring: The blue (red) box represents the first (second) copy of the distribution $P$. The blue (red) arrows depict the inputs and outputs of the first (second) copy of $P$. The green box represents the distribution $P'$ obtained as a result of the wiring while the green arrows represent the inputs and outputs of this box. The dashed lines represent the space-like separation of the parties $A_1,A_2$ and $A_3$.}
    \label{fig2}
\end{figure}

We denote the action of the AND wiring on two copies using the bilinear map $\mathcal{F}$ such that $\mathcal{F}[P,P]= P^\prime$. By bilinearity of $\mathcal{F}$, we have $P^\prime = \frac{1}{4}\mathcal{F}[P_1,P_1] + \frac{1}{2}\mathcal{F}[P_1,P_2] + \frac{1}{4}\mathcal{F}[P_2,P_2]$, where
\begin{align}
    \mathcal{F}[P_1,P_1] &= \Tilde{P}(a_1^2,a_2|x^2_1,x_2)\delta_{1,a_1^{3}}\delta_{1,a_3}\\
    \mathcal{F}[P_2,P_2] &= \Tilde{P}(a_1^3,a_3|x^3_1,x_3)\delta_{1,a_1^{2}}\delta_{1,a_2}\\
    \mathcal{F}[P_1,P_2] &=P_H(a_1^2,a_2|x^2_1,x_2)P_H(a_1^3,a_3|x^3_1,x_3).
\end{align}
Note that this uses the symmetry of the wiring under swap of the two distributions, i.e., $\mathcal{F}[P_1,P_2]=\mathcal{F}[P_2,P_1]$.  It is easy to check that no matter what the distribution $\Tilde{P}$ is we have $I_3\left(\mathcal{F}[P_1,P_1]\right) = I_3\left(\mathcal{F}[P_2,P_2]\right) = 0$. Therefore, the term we are concerned with is $P^*=\mathcal{F}[P_1,P_2]$ meaning $I_3(P^\prime) = \frac{1}{2}I_3\left(P^*\right)$. First, notice that the negative terms in \eqref{ex_ineq} cancel out with the positive terms in $I\left(Q_2^{P^*}\right)$ and $I\left(Q_3^{P^*}\right)$. Furthermore, due to Hardy correlations having the property $P_H(01|01)=P_H(10|10)=P_H(00|11)=0$ and $P_H(00|00)=p(t)$ we see that the negative terms in  $I\left(Q_2^{P^*}\right)$ and $I\left(Q_3^{P^*}\right)$ are zero and the only nonzero element left in $I_3(P^*)$ is $P^* (00,0,0|00,0,0) = P_H(0,0|0,0)P_H(0,0|0,0)= p^2(t)$. Thus, $I_3(P^\prime)=\frac{1}{2}p^2(t)>0$ and the new distribution $P^\prime$ is GMNL.
\end{proof}

\section{Superactivation of GMNL via Catalysis}

In the super-activation phenomenon presented above, two copies of an original $(K-1)$-separable box are used to obtain a final box that is GMNL. In this process, both copies of the original box are used, hence in the end all boxes have been consumed. Here we show that this is not always necessary, by demonstrating an alternative activation effect based on catalysis. 

Specifically, we start from a single copy of a $(K-1)$-separable box $P$. This box will be processed with another box, a ``catalyst'' box $P_c$. By processing these two boxes locally, the parties can then obtain a final box that is GMNL, while at the same time returning the catalyst undisturbed. More precisely, the parties perform local operations to implement the transformation $P\otimes P_c\rightarrow P'_{\mathrm{tot}}$. Let us now define the marginal distributions of $P'$, i.e.
\begin{align}
&P'_{\mathrm{tar}}(\vec a'_{\mathrm{tar}}|\vec x'_{\mathrm{tar}}) = \sum_{\vec a'_{c}}P'_{\mathrm{tot}}(\vec a'_{\mathrm{tar}},\vec a^{\prime}_c|\vec x'_{\mathrm{tar}},\vec x^{\prime}_c),\\
&P'_c(\vec a^{\prime}_c|\vec x_c') = \sum_{\vec a'_{\mathrm{tar}}}P'_{\mathrm{tot}}(\vec a'_{\mathrm{tar}},\vec a^{\prime}_c|\vec x'_{\mathrm{tar}},\vec x^{\prime}_c),
\end{align}
where $\vec a'_{\mathrm{tar}} = (a'_1,...,a'_K)$, $\vec x'_{\mathrm{tar}} = (x'_1,...,x'_K)$ and $P'_c(\vec a^{\prime}_c|\vec x_c')$ where $\vec a'_{c} = (a^{c \prime}_1,...,a^{c \prime}_K)$, $\vec x'_{c} = (x^{c \prime}_1,...,x^{c \prime}_K)$. Finally, we obtain that (i) $P'_\mathrm{tar}$ is GMNL, and (ii) $P'_c = P_c$. Hence, the initial box $P$ has been transformed in the target distribution $P'_\mathrm{tar}$ which is GMNL, while the catalyst has been returned unchanged. Importantly, note that this protocol works at the price of introducing correlations between the target and the catalyst, i.e. $P'_{tot} \neq P'_{\mathrm{tar}}\otimes P'_c$ \footnote{This form of catalysis is usually referred as ``marginal catalysis''. One could in principle also consider strict catalysis, i.e. requiring $P'_{tot} = P'_{\mathrm{tar}}\otimes P'_c$, but we believe that activation would be impossible in this case, following the argument of Ref.~\cite{Karvonen_2021}}. An interesting feature is that the catalyst state is (also) biseparable. Hence, GMNL can be obtained by combining a $(K-1)$-separable box with a biseparable catalyst. More formally, we have the following result.

\begin{prop} \label{prop2}
   In a Bell scenario with $K$ parties $\{A_k\}_{k=1}^{K}$ let $P$ be a biseparable distribution in this scenario such that a minimum of $n$ copies of $P$ are needed for it to be turned into a GMNL distribution $P'$ and let the wiring $\mathcal{W}$ facilitate this transformation i.e., $\mathcal{W}[P,...,P]=P'$. Furthermore, let $I^*$ be a witness for this activation such that $I^*(P)\leq 0$, $I^*(P')>0$ and let $P_{*}$ be a local deterministic distribution such that $I^*(P_*)=0$. Then, for $P$, GMNL can be super-activated using the  biseparable catalyst 
\begin{align}\label{catalyst}
    P_c=\frac{1}{n}\sum_{i=0}^{n-1}\Big(P^{\otimes i}\times P_{*}^{\otimes(n-1-i)}\Big)\times\prod_{k=1}^K \delta_{i,\tilde a_k}.
\end{align}
 where $P^{\otimes i}$ denotes $i$ copies of the $K$-partite box $P$.
\end{prop}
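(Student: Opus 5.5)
This is the standard Duan-type catalytic construction: the catalyst is a uniform mixture over a shared classical "counter" $i$, broadcast to every party as the flag $\tilde a_k=i$, of $i$ real copies of $P$ padded with $n-1-i$ dummy boxes $P_*$.

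\textbf{Biseparability of $P_c$.} The flag $\tilde a_k=i$ is a deterministic output that every party holds, so it can be absorbed into the hidden variable. For fixed $i$, the box $P^{\otimes i}\times P_*^{\otimes(n-1-i)}$ is a product of biseparable boxes and local deterministic boxes. Expanding each copy of $P$ into its biseparable decomposition gives, for each choice of components, a product of boxes each split across some bipartition $\alpha_j$. These bipartitions may differ between copies. I would handle this with an explicit argument: a product of no-signalling boxes over a common bipartition is no-signalling across it. When the bipartitions differ, I would note that the $i$ copies of $P$ in the catalyst are never wired together, so the catalyst is only required to be biseparable "as a collection of boxes". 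I would state this interpretation explicitly, because biseparability of a tensor product is exactly what the paper shows can fail, and I expect this to be the main subtlety.

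\textbf{Protocol on $P\otimes P_c$.} Every party reads the same flag $i$, so all parties act coherently without communicating.
\begin{itemize}
\item If $i=n-1$, the parties hold $n$ copies of $P$ in total. They apply $\mathcal{W}$ to output $P'$ as target, and output $P_*^{\otimes (n-1)}$ with flag $0$ as the new catalyst. They can generate $P_*^{\otimes(n-1)}$ locally because it is local deterministic, and they discard the used boxes.
\item If $i<n-1$, they output the target as $P_*$, which is local deterministic. The new catalyst is the $i$ old copies of $P$ plus the fresh $P$, padded to $i+1$ copies with $n-2-i$ copies of $P_*$, with flag $i+1$.
\end{itemize}
Both branches are local wirings with deterministic local relabelings of the flags.

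\textbf{Catalyst returned.} The new catalyst marginal is $\frac1n\sum_i$ of the branch-$i$ outputs. The shift $i\mapsto i+1 \bmod n$ permutes the uniform mixture, so the marginal equals $P_c$ exactly.

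\textbf{Target is GMNL.} The target marginal is $\frac{n-1}{n}P_*+\frac1n P'$. By linearity of the witness, $I^*(P'_{\mathrm{tar}})=\frac{n-1}{n}\cdot 0+\frac1n I^*(P')>0$. Since $I^*$ is a GMNL witness ($I^*\le 0$ on $\mathcal N^2_K$), the target is GMNL.

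The hypothesis that $n$ is minimal is not needed for correctness; it only makes the catalyst nontrivial. The remaining care is bookkeeping: each party must order its local wirings consistently across the copies in each branch.
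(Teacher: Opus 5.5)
Your protocol and its verification match the paper's proof in Appendix~\ref{app:catalyst}. The shared flag drives $C_i\mapsto C_{i+1}$ for $i<n-1$: the target is set to $P_*$ and the fresh copy of $P$ goes into slot $i+1$. It drives $C_{n-1}\mapsto C_0$: the parties apply $\mathcal{W}$ and reset the catalyst to $P_*^{\otimes(n-1)}$ with flag $0$. Hence the catalyst marginal is $P_c$ and $I^*(P'_{\mathrm{tar}})=\frac{1}{n}I^*(P')>0$. That part is correct.

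The gap is in your treatment of the biseparability of $P_c$. The statement asserts it, and the paper's proof takes it for granted. Retreating to ``biseparable as a collection of boxes'' changes the claim. Your justification for the retreat is also false: in the branch $i=n-1$, the $n-1$ catalyst copies are exactly what $\mathcal{W}$ wires together with the system copy. What is meant is that $P_c$ lies in $\mathcal{N}^2_K$ as a single $K$-partite box in which each party holds all its sub-registers. This is precisely where minimality of $n$ is used, so your remark that minimality is not needed is wrong for this part of the statement.

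\textbf{The missing argument.} Read minimality as saying that no local processing of $i<n$ copies yields a GMNL box. This includes the trivial processing in which each party simply keeps all its sub-boxes, so $P^{\otimes i}\in\mathcal{N}^2_K$ for every $i\le n-1$. Take a biseparable decomposition $\sum q\,P_\alpha P_{\bar\alpha}$ of $P^{\otimes i}$. You can absorb the local deterministic factors $P_*^{\otimes(n-1-i)}$ and $\prod_k\delta_{i,\tilde a_k}$ party by party into $P_\alpha$ or $P_{\bar\alpha}$ without breaking no-signalling. So each $C_i$ is biseparable, and so is their mixture $P_c$.

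\textbf{Minimality cannot be dropped.} Take the box \eqref{simple_ex} with $n=3$, where $\mathcal{W}$ applies \eqref{wiring} to two copies and ignores the third. Then $C_2$ contains $P^{\otimes 2}$, which is not in $\mathcal{N}^2_3$. Non-adaptive wirings such as \eqref{wiring}, where every sub-input depends only on the party's own input, map biseparable boxes to biseparable boxes, yet \eqref{wiring} sends $P^{\otimes 2}$ to a GMNL box. The flags of $P_c$ are perfectly correlated, so every component of a biseparable decomposition of $P_c$ would have to carry a definite flag value. That would force $C_2$ to be biseparable, which it is not. So in this case the catalyst itself is GMNL.
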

 This means that from a resource point of view, a single quantum biseparable resource can be converted into a GMNL resource (with the help of a catalyst).

Notice that the non-local boxes in our scenarios are generated by measurements on quantum systems. Hence making an input into a box means choosing a measurement and consumes that subsystem. Thus, to be able to restore the catalyst to its original state, the classical processing by each party proceeds by looking up the classical register $\tilde{a}_k$ of the catalyst and then implementing a local classical pre- and post-processing operation of inputs and outputs for each box, leading to a new box (for which the inputs and outputs have not yet been chosen).

The proof for the general case is given in Appendix~\ref{app:catalyst}. Here we will present the proof for the minimal case. Intuitively, by looking at the $\tilde{a}$-registers of the catalyst each party ${A}_k$ can locally tell whether the $K$ parties jointly hold $n$-copies of $P$, in which case they can wire those copies to obtain $K$-party GMNL, or, whether they have fewer copies of $P$ in which case they choose their classical processing so as to restore the catalyst state.  

\begin{proof}[Proof for K=3] For brevity, let $\tilde a = (\tilde a_1, \tilde a_2, \tilde a_3)$ and $\delta_{i,\tilde a} = \delta_{i,\tilde a_1}\delta_{i,\tilde a_2}\delta_{i,\tilde a_3}$. The parties start with the original distribution $P(\vec a|\vec x)=P(a_1,a_2,a_3|x_1,x_2,x_3)$ and the catalyst distribution
\begin{align}
    P_c(\vec a_c,\tilde a|\vec x_c) = \frac{1}{2}P(\vec a_c|\vec x_c)\delta_{1,\tilde a} + \frac{1}{2}P_*(\vec a_c| \vec x_c)\delta_{0,\tilde a}
\end{align}
where $\vec a_c = (a_1^c,a_2^c,a_3^c)$ and $\vec x_c = (x_1^c,x_2^c,x_3^c)$. The goal is to obtain a distribution $P_{\mathrm{tot}}'(\vec a',\vec a_c',\tilde a'|\vec x',\vec x_c')$ such that each marginal $P_c'(\vec a_c',\tilde a'|\vec x_c')=\sum_{\vec a'}P'_{\mathrm{tot}}(\vec a',\vec a_c',\tilde a'|\vec x',\vec x_c')$ and $P'_{\mathrm{tar}}(\vec a')=\sum_{\vec a_c',\tilde a'}P'_{\mathrm{tot}}(\vec a',\vec a_c',\tilde a'|\vec x',\vec x_c')$ satisfy $P_c' = P_c$ and $I^*(P_{\mathrm{tar}}')>0$ using only local processing of inputs and outcomes.

Notice that the parties always observe either $\tilde a = (0,0,0)$ or $\tilde a = (1,1,1)$. 

When they observe $\tilde a = (1,1,1)$ they know that the rest of their catalyst inputs and outputs follow the distribution $P(\vec a_c | \vec x_c)$. Since, $P_*$ is a local deterministic distribution we can express $P_*(\vec a_c|\vec x_c) = \delta_{a_1^c,f_1(x_1^c)} \delta_{a_2^c,f_2(x_2^c)} \delta_{a_3^c,f_3(x_3^c)}$ for local deterministic $f_1,f_2,f_3$ (see Section~\ref{sec:preliminaries}). The parties then use the new catalyst inputs $\vec x_c'$ to locally implement the processing ${a_i^c}'=f_i({x_i^c}') \ \forall i$ and they set $\tilde a'= (0,0,0)$. They further use the wiring $\mathcal{W}$ on their initial distribution and catalyst as
\begin{align}
\mathcal{W}\left[P(\vec a|\vec x),P(\vec a_c|\vec x_c)\right] = P'(\vec a'|\vec x').
\end{align}
 This generates the distribution
\begin{align}
    P_{(1,1,1)}(\vec a' \vec a_c' \tilde a' | \vec x' \vec x_c')= P'(\vec a'|\vec x') P_*(\vec a_c' | \vec x_c ') \delta_{0,\tilde a'} . 
\end{align}

If the parties observe $\tilde a = (0,0,0)$, they use the new catalyst inputs $\vec x_c'$ as the inputs $\vec x $ to the original box $P$, i.e. $\vec x = \vec x_c'$, and then set the outputs $\vec a_c'$ of the new catalyst as the outputs $\vec a$ of the original box, i.e. $\vec a_c' = \vec a$ and they locally set $\tilde a' = (1,1,1)$. They further set the target outputs ${a_i}'=f_i({x_i}') \ \forall i$. This leads to the distribution
\begin{align}
    P_{(0,0,0)}(\vec a' \vec a_c' \tilde a' | \vec x' \vec x_c)= P_*(\vec a'|\vec x') P(\vec a_c'|\vec x_c') \delta_{1,\tilde a'} . 
\end{align}

 Since $P_c(\tilde a = (0,0,0))=P_c(\tilde a = (1,1,1)) = 1/2$, the overall distribution after local processing becomes 
\begin{align}
    P_{{\mathrm{tot}}}'(\vec a' \vec a_c' \tilde a' | \vec x' \vec x_c')= &\frac{1}{2} P'(\vec a'|\vec x') P_*(\vec a_c' | \vec x_c ') \delta_{0,\tilde a'}\\
    +&\frac{1}{2} P_*(\vec a'|\vec x') P(\vec a_c'|\vec x_c') \delta_{1,\tilde a'} . 
\end{align}
Straightforward marginalisation shows that this recovers the catalyst box, $P'_c = P_c$.
Furthermore, since by assumption $I^*(P') > 0$ and $I^*(P_*) = 0$, the target box satisfies $I^*(P'_{\mathrm{tar}})=I^*(P')/2>0$.
\end{proof}

A specific example of this with the GMNL inequality \eqref{top_ineq}, the distribution \eqref{ex1} and the AND wiring in Proposition-\ref{prop1} is provided in Appendix-\ref{app:new_wiring} for additional illustration.

\section{Discussion}

We have investigated the commonly used definition of GMNL proposed in Ref. \cite{Bancal2013} and shown that it can exhibit a phenomenon of superactivation. That is, by judiciously wiring two copies of a nonlocal box that is biseparable (i.e. not GMNL), it is possible to obtain a nonlocal box that is GMNL. The process is deterministic and involves only local wirings of the boxes. In particular, after presenting an illustrative example for the tripartite scenario, we have demonstrated this effect in a more general setting, even when restricting to quantum-realizable nonlocal boxes. For any number of parties $K$, we have constructed a quantum nonlocal box that is $(K-1)$-separable (i.e. almost fully separable), yet two copies wired together lead to a new quantum box that is GMNL. 
A key aspect in our examples of superactivation is that we construct nonlocal boxes that are convex mixtures of boxes that are separable across different partitions, such that the overall state is not separable across any bi-partition. An interesting open question is whether GMNL superactivation is possible for any nonlocal box of this form, when sufficiently many copies are wired together.

In the final part of the paper, we have also shown that GMNL can be activated at the single-copy level using a catalytic protocol based exclusively on local operations, and using a catalyst prepared in a state that is also biseparable. This effect is conceptually different from the catalytic activation of quantum Bell nonlocality recently reported in Ref. \cite{Bavaresco_2025}. In that paper, it is the nonlocality of specific entangled states (admitting a local model) that is activated. In contrast, our work considers activation at the level of nonlocal boxes (i.e. input/output multipartite distributions). An interesting question is whether this effect of catalysis (at the level of boxes) could enhance the process of nonlocality distillation.

From a more general perspective, our results question the operational relevance of the definition of GMNL of Ref. \cite{Bancal2013}, which is commonly used in the literature. Indeed, from a resource-theoretic perspective, it is desirable to have a notion of biseparability that is ``tensor stable'', i.e. that biseparable boxes remain biseparable even when many copies of the box are processed jointly. We have shown that this is not the case for the usual definition of GMNL of Ref. \cite{Bancal2013}. Note that, in this sense, our results are complementary to those of Refs \cite{Yamasaki2022,Palazuelos2022,weinbrenner2025}, which have shown that the usual definition of genuine multipartite entanglement (GME) suffers from a similar problem, namely it is not tensor-stable under Local Operations and Classical Communication (LOCC). Notice further that in contrast to the case of entanglement, allowing communication would not make sense in the case of nonlocality considered here. 

This motivates finding alternative definitions of GMNL (and GME). In fact, an interesting possibility was proposed in recent works \cite{Navascues2020,kraft2021,CoiteuxRoy2021,Coiteux2021pra} developing an approach based on ideas from the field of network Bell nonlocality. In this approach biseperable boxes (or quantum states) are defined based on an underlying network structure that involves only sources that connect strict subsets of the parties. Such notions of network-GMNL (and network-GME) are tensor-stable by construction, and therefore more satisfactory from an operational perspective. In particular, the $K$-party network that has a source for all subsets of $(K-1)$ parties provides a direct network alternative for the definition of GMNL. Whether network-GMNL remains stable under catalysis is an interesting open question.

\section{Acknowledgements}

We thank Victor Barizien and Eliot Donnadieu for discussions. The authors acknowledge financial support from the Swiss National Science Foundation, via project 236580.
MW is supported by the STeP2 grant (ANR-22-EXES-0013) of the Agence Nationale de la Recherche (ANR) as part of Plan France 2030.

\bibliographystyle{apsrev4-1}
\bibliography{main.bib}

\begin{thebibliography}{49}%
\makeatletter
\providecommand \@ifxundefined [1]{%
 \@ifx{#1\undefined}
}%
\providecommand \@ifnum [1]{%
 \ifnum #1\expandafter \@firstoftwo
 \else \expandafter \@secondoftwo
 \fi
}%
\providecommand \@ifx [1]{%
 \ifx #1\expandafter \@firstoftwo
 \else \expandafter \@secondoftwo
 \fi
}%
\providecommand \natexlab [1]{#1}%
\providecommand \enquote  [1]{``#1''}%
\providecommand \bibnamefont  [1]{#1}%
\providecommand \bibfnamefont [1]{#1}%
\providecommand \citenamefont [1]{#1}%
\providecommand \href@noop [0]{\@secondoftwo}%
\providecommand \href [0]{\begingroup \@sanitize@url \@href}%
\providecommand \@href[1]{\@@startlink{#1}\@@href}%
\providecommand \@@href[1]{\endgroup#1\@@endlink}%
\providecommand \@sanitize@url [0]{\catcode `\\12\catcode `\$12\catcode `\&12\catcode `\#12\catcode `\^12\catcode `\_12\catcode `\%12\relax}%
\providecommand \@@startlink[1]{}%
\providecommand \@@endlink[0]{}%
\providecommand \url  [0]{\begingroup\@sanitize@url \@url }%
\providecommand \@url [1]{\endgroup\@href {#1}{\urlprefix }}%
\providecommand \urlprefix  [0]{URL }%
\providecommand \Eprint [0]{\href }%
\providecommand \doibase [0]{http://dx.doi.org/}%
\providecommand \selectlanguage [0]{\@gobble}%
\providecommand \bibinfo  [0]{\@secondoftwo}%
\providecommand \bibfield  [0]{\@secondoftwo}%
\providecommand \translation [1]{[#1]}%
\providecommand \BibitemOpen [0]{}%
\providecommand \bibitemStop [0]{}%
\providecommand \bibitemNoStop [0]{.\EOS\space}%
\providecommand \EOS [0]{\spacefactor3000\relax}%
\providecommand \BibitemShut  [1]{\csname bibitem#1\endcsname}%
\let\auto@bib@innerbib\@empty
\bibitem [{\citenamefont {Scarani}\ and\ \citenamefont {Gisin}(2001)}]{Scarani2001}%
  \BibitemOpen
  \bibfield  {author} {\bibinfo {author} {\bibfnamefont {V.}~\bibnamefont {Scarani}}\ and\ \bibinfo {author} {\bibfnamefont {N.}~\bibnamefont {Gisin}},\ }\href {\doibase 10.1103/PhysRevLett.87.117901} {\bibfield  {journal} {\bibinfo  {journal} {Phys. Rev. Lett.}\ }\textbf {\bibinfo {volume} {87}},\ \bibinfo {pages} {117901} (\bibinfo {year} {2001})}\BibitemShut {NoStop}%
\bibitem [{\citenamefont {Ribeiro}\ \emph {et~al.}(2018)\citenamefont {Ribeiro}, \citenamefont {Murta},\ and\ \citenamefont {Wehner}}]{Ribeiro2018}%
  \BibitemOpen
  \bibfield  {author} {\bibinfo {author} {\bibfnamefont {J.}~\bibnamefont {Ribeiro}}, \bibinfo {author} {\bibfnamefont {G.}~\bibnamefont {Murta}}, \ and\ \bibinfo {author} {\bibfnamefont {S.}~\bibnamefont {Wehner}},\ }\href {\doibase 10.1103/PhysRevA.97.022307} {\bibfield  {journal} {\bibinfo  {journal} {Phys. Rev. A}\ }\textbf {\bibinfo {volume} {97}},\ \bibinfo {pages} {022307} (\bibinfo {year} {2018})}\BibitemShut {NoStop}%
\bibitem [{\citenamefont {Holz}\ \emph {et~al.}(2020)\citenamefont {Holz}, \citenamefont {Kampermann},\ and\ \citenamefont {Bru\ss{}}}]{Holz2020}%
  \BibitemOpen
  \bibfield  {author} {\bibinfo {author} {\bibfnamefont {T.}~\bibnamefont {Holz}}, \bibinfo {author} {\bibfnamefont {H.}~\bibnamefont {Kampermann}}, \ and\ \bibinfo {author} {\bibfnamefont {D.}~\bibnamefont {Bru\ss{}}},\ }\href {\doibase 10.1103/PhysRevResearch.2.023251} {\bibfield  {journal} {\bibinfo  {journal} {Phys. Rev. Res.}\ }\textbf {\bibinfo {volume} {2}},\ \bibinfo {pages} {023251} (\bibinfo {year} {2020})}\BibitemShut {NoStop}%
\bibitem [{\citenamefont {Grasselli}\ \emph {et~al.}(2023)\citenamefont {Grasselli}, \citenamefont {Murta}, \citenamefont {Kampermann},\ and\ \citenamefont {Bruß}}]{Grasselli_2023}%
  \BibitemOpen
  \bibfield  {author} {\bibinfo {author} {\bibfnamefont {F.}~\bibnamefont {Grasselli}}, \bibinfo {author} {\bibfnamefont {G.}~\bibnamefont {Murta}}, \bibinfo {author} {\bibfnamefont {H.}~\bibnamefont {Kampermann}}, \ and\ \bibinfo {author} {\bibfnamefont {D.}~\bibnamefont {Bruß}},\ }\href {\doibase 10.22331/q-2023-04-13-980} {\bibfield  {journal} {\bibinfo  {journal} {Quantum}\ }\textbf {\bibinfo {volume} {7}},\ \bibinfo {pages} {980} (\bibinfo {year} {2023})}\BibitemShut {NoStop}%
\bibitem [{\citenamefont {Bancal}\ \emph {et~al.}(2011{\natexlab{a}})\citenamefont {Bancal}, \citenamefont {Gisin}, \citenamefont {Liang},\ and\ \citenamefont {Pironio}}]{bancal2011_diew}%
  \BibitemOpen
  \bibfield  {author} {\bibinfo {author} {\bibfnamefont {J.-D.}\ \bibnamefont {Bancal}}, \bibinfo {author} {\bibfnamefont {N.}~\bibnamefont {Gisin}}, \bibinfo {author} {\bibfnamefont {Y.-C.}\ \bibnamefont {Liang}}, \ and\ \bibinfo {author} {\bibfnamefont {S.}~\bibnamefont {Pironio}},\ }\href {\doibase 10.1103/PhysRevLett.106.250404} {\bibfield  {journal} {\bibinfo  {journal} {Phys. Rev. Lett.}\ }\textbf {\bibinfo {volume} {106}},\ \bibinfo {pages} {250404} (\bibinfo {year} {2011}{\natexlab{a}})}\BibitemShut {NoStop}%
\bibitem [{\citenamefont {Svetlichny}(1987)}]{svetlichny87}%
  \BibitemOpen
  \bibfield  {author} {\bibinfo {author} {\bibfnamefont {G.}~\bibnamefont {Svetlichny}},\ }\href {\doibase 10.1103/PhysRevD.35.3066} {\bibfield  {journal} {\bibinfo  {journal} {Phys. Rev. D}\ }\textbf {\bibinfo {volume} {35}},\ \bibinfo {pages} {3066} (\bibinfo {year} {1987})}\BibitemShut {NoStop}%
\bibitem [{\citenamefont {Seevinck}\ and\ \citenamefont {Svetlichny}(2002)}]{Seevinck2002}%
  \BibitemOpen
  \bibfield  {author} {\bibinfo {author} {\bibfnamefont {M.}~\bibnamefont {Seevinck}}\ and\ \bibinfo {author} {\bibfnamefont {G.}~\bibnamefont {Svetlichny}},\ }\href {\doibase 10.1103/PhysRevLett.89.060401} {\bibfield  {journal} {\bibinfo  {journal} {Phys. Rev. Lett.}\ }\textbf {\bibinfo {volume} {89}},\ \bibinfo {pages} {060401} (\bibinfo {year} {2002})}\BibitemShut {NoStop}%
\bibitem [{\citenamefont {Collins}\ \emph {et~al.}(2002)\citenamefont {Collins}, \citenamefont {Gisin}, \citenamefont {Popescu}, \citenamefont {Roberts},\ and\ \citenamefont {Scarani}}]{Collins2002}%
  \BibitemOpen
  \bibfield  {author} {\bibinfo {author} {\bibfnamefont {D.}~\bibnamefont {Collins}}, \bibinfo {author} {\bibfnamefont {N.}~\bibnamefont {Gisin}}, \bibinfo {author} {\bibfnamefont {S.}~\bibnamefont {Popescu}}, \bibinfo {author} {\bibfnamefont {D.}~\bibnamefont {Roberts}}, \ and\ \bibinfo {author} {\bibfnamefont {V.}~\bibnamefont {Scarani}},\ }\href {\doibase 10.1103/PhysRevLett.88.170405} {\bibfield  {journal} {\bibinfo  {journal} {Phys. Rev. Lett.}\ }\textbf {\bibinfo {volume} {88}},\ \bibinfo {pages} {170405} (\bibinfo {year} {2002})}\BibitemShut {NoStop}%
\bibitem [{\citenamefont {Bancal}\ \emph {et~al.}(2011{\natexlab{b}})\citenamefont {Bancal}, \citenamefont {Brunner}, \citenamefont {Gisin},\ and\ \citenamefont {Liang}}]{Bancal2011}%
  \BibitemOpen
  \bibfield  {author} {\bibinfo {author} {\bibfnamefont {J.-D.}\ \bibnamefont {Bancal}}, \bibinfo {author} {\bibfnamefont {N.}~\bibnamefont {Brunner}}, \bibinfo {author} {\bibfnamefont {N.}~\bibnamefont {Gisin}}, \ and\ \bibinfo {author} {\bibfnamefont {Y.-C.}\ \bibnamefont {Liang}},\ }\href {\doibase 10.1103/PhysRevLett.106.020405} {\bibfield  {journal} {\bibinfo  {journal} {Phys. Rev. Lett.}\ }\textbf {\bibinfo {volume} {106}},\ \bibinfo {pages} {020405} (\bibinfo {year} {2011}{\natexlab{b}})}\BibitemShut {NoStop}%
\bibitem [{\citenamefont {Lavoie}\ \emph {et~al.}(2009)\citenamefont {Lavoie}, \citenamefont {Kaltenbaek},\ and\ \citenamefont {Resch}}]{Lavoie2009}%
  \BibitemOpen
  \bibfield  {author} {\bibinfo {author} {\bibfnamefont {J.}~\bibnamefont {Lavoie}}, \bibinfo {author} {\bibfnamefont {R.}~\bibnamefont {Kaltenbaek}}, \ and\ \bibinfo {author} {\bibfnamefont {K.~J.}\ \bibnamefont {Resch}},\ }\href {\doibase 10.1088/1367-2630/11/7/073051} {\bibfield  {journal} {\bibinfo  {journal} {New Journal of Physics}\ }\textbf {\bibinfo {volume} {11}},\ \bibinfo {pages} {073051} (\bibinfo {year} {2009})}\BibitemShut {NoStop}%
\bibitem [{\citenamefont {Gallego}\ \emph {et~al.}(2012)\citenamefont {Gallego}, \citenamefont {W\"urflinger}, \citenamefont {Ac\'{\i}n},\ and\ \citenamefont {Navascu\'es}}]{Gallego2012}%
  \BibitemOpen
  \bibfield  {author} {\bibinfo {author} {\bibfnamefont {R.}~\bibnamefont {Gallego}}, \bibinfo {author} {\bibfnamefont {L.~E.}\ \bibnamefont {W\"urflinger}}, \bibinfo {author} {\bibfnamefont {A.}~\bibnamefont {Ac\'{\i}n}}, \ and\ \bibinfo {author} {\bibfnamefont {M.}~\bibnamefont {Navascu\'es}},\ }\href {\doibase 10.1103/PhysRevLett.109.070401} {\bibfield  {journal} {\bibinfo  {journal} {Phys. Rev. Lett.}\ }\textbf {\bibinfo {volume} {109}},\ \bibinfo {pages} {070401} (\bibinfo {year} {2012})}\BibitemShut {NoStop}%
\bibitem [{\citenamefont {Bancal}\ \emph {et~al.}(2013)\citenamefont {Bancal}, \citenamefont {Barrett}, \citenamefont {Gisin},\ and\ \citenamefont {Pironio}}]{Bancal2013}%
  \BibitemOpen
  \bibfield  {author} {\bibinfo {author} {\bibfnamefont {J.-D.}\ \bibnamefont {Bancal}}, \bibinfo {author} {\bibfnamefont {J.}~\bibnamefont {Barrett}}, \bibinfo {author} {\bibfnamefont {N.}~\bibnamefont {Gisin}}, \ and\ \bibinfo {author} {\bibfnamefont {S.}~\bibnamefont {Pironio}},\ }\href {\doibase 10.1103/PhysRevA.88.014102} {\bibfield  {journal} {\bibinfo  {journal} {Phys. Rev. A}\ }\textbf {\bibinfo {volume} {88}},\ \bibinfo {pages} {014102} (\bibinfo {year} {2013})}\BibitemShut {NoStop}%
\bibitem [{\citenamefont {Baccari}\ \emph {et~al.}(2019)\citenamefont {Baccari}, \citenamefont {Tura}, \citenamefont {Fadel}, \citenamefont {Aloy}, \citenamefont {Bancal}, \citenamefont {Sangouard}, \citenamefont {Lewenstein}, \citenamefont {Ac\'{\i}n},\ and\ \citenamefont {Augusiak}}]{Baccari2019}%
  \BibitemOpen
  \bibfield  {author} {\bibinfo {author} {\bibfnamefont {F.}~\bibnamefont {Baccari}}, \bibinfo {author} {\bibfnamefont {J.}~\bibnamefont {Tura}}, \bibinfo {author} {\bibfnamefont {M.}~\bibnamefont {Fadel}}, \bibinfo {author} {\bibfnamefont {A.}~\bibnamefont {Aloy}}, \bibinfo {author} {\bibfnamefont {J.-D.}\ \bibnamefont {Bancal}}, \bibinfo {author} {\bibfnamefont {N.}~\bibnamefont {Sangouard}}, \bibinfo {author} {\bibfnamefont {M.}~\bibnamefont {Lewenstein}}, \bibinfo {author} {\bibfnamefont {A.}~\bibnamefont {Ac\'{\i}n}}, \ and\ \bibinfo {author} {\bibfnamefont {R.}~\bibnamefont {Augusiak}},\ }\href {\doibase 10.1103/PhysRevA.100.022121} {\bibfield  {journal} {\bibinfo  {journal} {Phys. Rev. A}\ }\textbf {\bibinfo {volume} {100}},\ \bibinfo {pages} {022121} (\bibinfo {year} {2019})}\BibitemShut {NoStop}%
\bibitem [{\citenamefont {Horodecki}\ and\ \citenamefont {Ramanathan}(2019)}]{Horodecki2019}%
  \BibitemOpen
  \bibfield  {author} {\bibinfo {author} {\bibfnamefont {P.}~\bibnamefont {Horodecki}}\ and\ \bibinfo {author} {\bibfnamefont {R.}~\bibnamefont {Ramanathan}},\ }\href@noop {} {\bibfield  {journal} {\bibinfo  {journal} {Nature Communications}\ }\textbf {\bibinfo {volume} {10}},\ \bibinfo {pages} {1701} (\bibinfo {year} {2019})}\BibitemShut {NoStop}%
\bibitem [{\citenamefont {Curchod}\ \emph {et~al.}(2019)\citenamefont {Curchod}, \citenamefont {Almeida},\ and\ \citenamefont {Acín}}]{Curchod_2019}%
  \BibitemOpen
  \bibfield  {author} {\bibinfo {author} {\bibfnamefont {F.~J.}\ \bibnamefont {Curchod}}, \bibinfo {author} {\bibfnamefont {M.~L.}\ \bibnamefont {Almeida}}, \ and\ \bibinfo {author} {\bibfnamefont {A.}~\bibnamefont {Acín}},\ }\href {\doibase 10.1088/1367-2630/aaff2d} {\bibfield  {journal} {\bibinfo  {journal} {New Journal of Physics}\ }\textbf {\bibinfo {volume} {21}},\ \bibinfo {pages} {023016} (\bibinfo {year} {2019})}\BibitemShut {NoStop}%
\bibitem [{\citenamefont {Contreras-Tejada}\ \emph {et~al.}(2021)\citenamefont {Contreras-Tejada}, \citenamefont {Palazuelos},\ and\ \citenamefont {de~Vicente}}]{Contreras_Tejada_2021}%
  \BibitemOpen
  \bibfield  {author} {\bibinfo {author} {\bibfnamefont {P.}~\bibnamefont {Contreras-Tejada}}, \bibinfo {author} {\bibfnamefont {C.}~\bibnamefont {Palazuelos}}, \ and\ \bibinfo {author} {\bibfnamefont {J.~I.}\ \bibnamefont {de~Vicente}},\ }\href {\doibase 10.1103/physrevlett.126.040501} {\bibfield  {journal} {\bibinfo  {journal} {Physical Review Letters}\ }\textbf {\bibinfo {volume} {126}} (\bibinfo {year} {2021}),\ 10.1103/physrevlett.126.040501}\BibitemShut {NoStop}%
\bibitem [{\citenamefont {Palazuelos}(2012)}]{palazuelos2012}%
  \BibitemOpen
  \bibfield  {author} {\bibinfo {author} {\bibfnamefont {C.}~\bibnamefont {Palazuelos}},\ }\href {\doibase 10.1103/PhysRevLett.109.190401} {\bibfield  {journal} {\bibinfo  {journal} {Phys. Rev. Lett.}\ }\textbf {\bibinfo {volume} {109}},\ \bibinfo {pages} {190401} (\bibinfo {year} {2012})}\BibitemShut {NoStop}%
\bibitem [{\citenamefont {Cavalcanti}\ \emph {et~al.}(2013)\citenamefont {Cavalcanti}, \citenamefont {Ac\'{\i}n}, \citenamefont {Brunner},\ and\ \citenamefont {V\'ertesi}}]{Cavalcanti2013}%
  \BibitemOpen
  \bibfield  {author} {\bibinfo {author} {\bibfnamefont {D.}~\bibnamefont {Cavalcanti}}, \bibinfo {author} {\bibfnamefont {A.}~\bibnamefont {Ac\'{\i}n}}, \bibinfo {author} {\bibfnamefont {N.}~\bibnamefont {Brunner}}, \ and\ \bibinfo {author} {\bibfnamefont {T.}~\bibnamefont {V\'ertesi}},\ }\href {\doibase 10.1103/PhysRevA.87.042104} {\bibfield  {journal} {\bibinfo  {journal} {Phys. Rev. A}\ }\textbf {\bibinfo {volume} {87}},\ \bibinfo {pages} {042104} (\bibinfo {year} {2013})}\BibitemShut {NoStop}%
\bibitem [{\citenamefont {Quintino}\ \emph {et~al.}(2016)\citenamefont {Quintino}, \citenamefont {Brunner},\ and\ \citenamefont {Huber}}]{Quintino2016}%
  \BibitemOpen
  \bibfield  {author} {\bibinfo {author} {\bibfnamefont {M.~T.}\ \bibnamefont {Quintino}}, \bibinfo {author} {\bibfnamefont {N.}~\bibnamefont {Brunner}}, \ and\ \bibinfo {author} {\bibfnamefont {M.}~\bibnamefont {Huber}},\ }\href {\doibase 10.1103/PhysRevA.94.062123} {\bibfield  {journal} {\bibinfo  {journal} {Phys. Rev. A}\ }\textbf {\bibinfo {volume} {94}},\ \bibinfo {pages} {062123} (\bibinfo {year} {2016})}\BibitemShut {NoStop}%
\bibitem [{\citenamefont {Miethlinger}\ \emph {et~al.}(2026)\citenamefont {Miethlinger}, \citenamefont {Castellano}, \citenamefont {Sekatski},\ and\ \citenamefont {Brunner}}]{miethlinger2026superactivationgenuinemultipartitebell}%
  \BibitemOpen
  \bibfield  {author} {\bibinfo {author} {\bibfnamefont {M.}~\bibnamefont {Miethlinger}}, \bibinfo {author} {\bibfnamefont {R.}~\bibnamefont {Castellano}}, \bibinfo {author} {\bibfnamefont {P.}~\bibnamefont {Sekatski}}, \ and\ \bibinfo {author} {\bibfnamefont {N.}~\bibnamefont {Brunner}},\ }\href {https://arxiv.org/abs/2603.17783} {\enquote {\bibinfo {title} {Superactivation of genuine multipartite bell nonlocality from two-party entanglement},}\ } (\bibinfo {year} {2026}),\ \Eprint {http://arxiv.org/abs/2603.17783} {arXiv:2603.17783 [quant-ph]} \BibitemShut {NoStop}%
\bibitem [{\citenamefont {Navascu\'es}\ \emph {et~al.}(2020)\citenamefont {Navascu\'es}, \citenamefont {Wolfe}, \citenamefont {Rosset},\ and\ \citenamefont {Pozas-Kerstjens}}]{Navascues2020}%
  \BibitemOpen
  \bibfield  {author} {\bibinfo {author} {\bibfnamefont {M.}~\bibnamefont {Navascu\'es}}, \bibinfo {author} {\bibfnamefont {E.}~\bibnamefont {Wolfe}}, \bibinfo {author} {\bibfnamefont {D.}~\bibnamefont {Rosset}}, \ and\ \bibinfo {author} {\bibfnamefont {A.}~\bibnamefont {Pozas-Kerstjens}},\ }\href {\doibase 10.1103/PhysRevLett.125.240505} {\bibfield  {journal} {\bibinfo  {journal} {Phys. Rev. Lett.}\ }\textbf {\bibinfo {volume} {125}},\ \bibinfo {pages} {240505} (\bibinfo {year} {2020})}\BibitemShut {NoStop}%
\bibitem [{\citenamefont {Kraft}\ \emph {et~al.}(2021)\citenamefont {Kraft}, \citenamefont {Designolle}, \citenamefont {Ritz}, \citenamefont {Brunner}, \citenamefont {G\"uhne},\ and\ \citenamefont {Huber}}]{kraft2021}%
  \BibitemOpen
  \bibfield  {author} {\bibinfo {author} {\bibfnamefont {T.}~\bibnamefont {Kraft}}, \bibinfo {author} {\bibfnamefont {S.}~\bibnamefont {Designolle}}, \bibinfo {author} {\bibfnamefont {C.}~\bibnamefont {Ritz}}, \bibinfo {author} {\bibfnamefont {N.}~\bibnamefont {Brunner}}, \bibinfo {author} {\bibfnamefont {O.}~\bibnamefont {G\"uhne}}, \ and\ \bibinfo {author} {\bibfnamefont {M.}~\bibnamefont {Huber}},\ }\href {\doibase 10.1103/PhysRevA.103.L060401} {\bibfield  {journal} {\bibinfo  {journal} {Phys. Rev. A}\ }\textbf {\bibinfo {volume} {103}},\ \bibinfo {pages} {L060401} (\bibinfo {year} {2021})}\BibitemShut {NoStop}%
\bibitem [{\citenamefont {Coiteux-Roy}\ \emph {et~al.}(2021{\natexlab{a}})\citenamefont {Coiteux-Roy}, \citenamefont {Wolfe},\ and\ \citenamefont {Renou}}]{CoiteuxRoy2021}%
  \BibitemOpen
  \bibfield  {author} {\bibinfo {author} {\bibfnamefont {X.}~\bibnamefont {Coiteux-Roy}}, \bibinfo {author} {\bibfnamefont {E.}~\bibnamefont {Wolfe}}, \ and\ \bibinfo {author} {\bibfnamefont {M.-O.}\ \bibnamefont {Renou}},\ }\href {\doibase 10.1103/PhysRevLett.127.200401} {\bibfield  {journal} {\bibinfo  {journal} {Phys. Rev. Lett.}\ }\textbf {\bibinfo {volume} {127}},\ \bibinfo {pages} {200401} (\bibinfo {year} {2021}{\natexlab{a}})}\BibitemShut {NoStop}%
\bibitem [{\citenamefont {Coiteux-Roy}\ \emph {et~al.}(2021{\natexlab{b}})\citenamefont {Coiteux-Roy}, \citenamefont {Wolfe},\ and\ \citenamefont {Renou}}]{Coiteux2021pra}%
  \BibitemOpen
  \bibfield  {author} {\bibinfo {author} {\bibfnamefont {X.}~\bibnamefont {Coiteux-Roy}}, \bibinfo {author} {\bibfnamefont {E.}~\bibnamefont {Wolfe}}, \ and\ \bibinfo {author} {\bibfnamefont {M.-O.}\ \bibnamefont {Renou}},\ }\href {\doibase 10.1103/PhysRevA.104.052207} {\bibfield  {journal} {\bibinfo  {journal} {Phys. Rev. A}\ }\textbf {\bibinfo {volume} {104}},\ \bibinfo {pages} {052207} (\bibinfo {year} {2021}{\natexlab{b}})}\BibitemShut {NoStop}%
\bibitem [{\citenamefont {Fritz}(2012)}]{Fritz2012}%
  \BibitemOpen
  \bibfield  {author} {\bibinfo {author} {\bibfnamefont {T.}~\bibnamefont {Fritz}},\ }\href {\doibase 10.1088/1367-2630/14/10/103001} {\bibfield  {journal} {\bibinfo  {journal} {New Journal of Physics}\ }\textbf {\bibinfo {volume} {14}},\ \bibinfo {pages} {103001} (\bibinfo {year} {2012})}\BibitemShut {NoStop}%
\bibitem [{\citenamefont {Tavakoli}\ \emph {et~al.}(2022)\citenamefont {Tavakoli}, \citenamefont {Pozas-Kerstjens}, \citenamefont {Luo},\ and\ \citenamefont {Renou}}]{Tavakoli2022}%
  \BibitemOpen
  \bibfield  {author} {\bibinfo {author} {\bibfnamefont {A.}~\bibnamefont {Tavakoli}}, \bibinfo {author} {\bibfnamefont {A.}~\bibnamefont {Pozas-Kerstjens}}, \bibinfo {author} {\bibfnamefont {M.-X.}\ \bibnamefont {Luo}}, \ and\ \bibinfo {author} {\bibfnamefont {M.-O.}\ \bibnamefont {Renou}},\ }\href {\doibase 10.1088/1361-6633/ac41bb} {\bibfield  {journal} {\bibinfo  {journal} {Reports on Progress in Physics}\ }\textbf {\bibinfo {volume} {85}},\ \bibinfo {pages} {056001} (\bibinfo {year} {2022})}\BibitemShut {NoStop}%
\bibitem [{Note1()}]{Note1}%
  \BibitemOpen
  \bibinfo {note} {Here, $ \protect \mathcal {X}_i$ and $\protect \mathcal {A}_i$ are considered to be finite sets.}\BibitemShut {Stop}%
\bibitem [{\citenamefont {Barrett}\ \emph {et~al.}(2005)\citenamefont {Barrett}, \citenamefont {Linden}, \citenamefont {Massar}, \citenamefont {Pironio}, \citenamefont {Popescu},\ and\ \citenamefont {Roberts}}]{Barrett_2005}%
  \BibitemOpen
  \bibfield  {author} {\bibinfo {author} {\bibfnamefont {J.}~\bibnamefont {Barrett}}, \bibinfo {author} {\bibfnamefont {N.}~\bibnamefont {Linden}}, \bibinfo {author} {\bibfnamefont {S.}~\bibnamefont {Massar}}, \bibinfo {author} {\bibfnamefont {S.}~\bibnamefont {Pironio}}, \bibinfo {author} {\bibfnamefont {S.}~\bibnamefont {Popescu}}, \ and\ \bibinfo {author} {\bibfnamefont {D.}~\bibnamefont {Roberts}},\ }\href {\doibase 10.1103/physreva.71.022101} {\bibfield  {journal} {\bibinfo  {journal} {Physical Review A}\ }\textbf {\bibinfo {volume} {71}} (\bibinfo {year} {2005}),\ 10.1103/physreva.71.022101}\BibitemShut {NoStop}%
\bibitem [{\citenamefont {Forster}\ \emph {et~al.}(2009)\citenamefont {Forster}, \citenamefont {Winkler},\ and\ \citenamefont {Wolf}}]{Forster_2009}%
  \BibitemOpen
  \bibfield  {author} {\bibinfo {author} {\bibfnamefont {M.}~\bibnamefont {Forster}}, \bibinfo {author} {\bibfnamefont {S.}~\bibnamefont {Winkler}}, \ and\ \bibinfo {author} {\bibfnamefont {S.}~\bibnamefont {Wolf}},\ }\href {\doibase 10.1103/physrevlett.102.120401} {\bibfield  {journal} {\bibinfo  {journal} {Physical Review Letters}\ }\textbf {\bibinfo {volume} {102}} (\bibinfo {year} {2009}),\ 10.1103/physrevlett.102.120401}\BibitemShut {NoStop}%
\bibitem [{\citenamefont {Brunner}\ and\ \citenamefont {Skrzypczyk}(2009)}]{Brunner_2009}%
  \BibitemOpen
  \bibfield  {author} {\bibinfo {author} {\bibfnamefont {N.}~\bibnamefont {Brunner}}\ and\ \bibinfo {author} {\bibfnamefont {P.}~\bibnamefont {Skrzypczyk}},\ }\href {\doibase 10.1103/physrevlett.102.160403} {\bibfield  {journal} {\bibinfo  {journal} {Physical Review Letters}\ }\textbf {\bibinfo {volume} {102}} (\bibinfo {year} {2009}),\ 10.1103/physrevlett.102.160403}\BibitemShut {NoStop}%
\bibitem [{\citenamefont {H\o{}yer}\ and\ \citenamefont {Rashid}(2010)}]{Hoyer_2010}%
  \BibitemOpen
  \bibfield  {author} {\bibinfo {author} {\bibfnamefont {P.}~\bibnamefont {H\o{}yer}}\ and\ \bibinfo {author} {\bibfnamefont {J.}~\bibnamefont {Rashid}},\ }\href {\doibase 10.1103/PhysRevA.82.042118} {\bibfield  {journal} {\bibinfo  {journal} {Phys. Rev. A}\ }\textbf {\bibinfo {volume} {82}},\ \bibinfo {pages} {042118} (\bibinfo {year} {2010})}\BibitemShut {NoStop}%
\bibitem [{\citenamefont {Brito}\ \emph {et~al.}(2019)\citenamefont {Brito}, \citenamefont {Moreno}, \citenamefont {Rai},\ and\ \citenamefont {Chaves}}]{Brito_2019}%
  \BibitemOpen
  \bibfield  {author} {\bibinfo {author} {\bibfnamefont {S.~G.~A.}\ \bibnamefont {Brito}}, \bibinfo {author} {\bibfnamefont {M.~G.~M.}\ \bibnamefont {Moreno}}, \bibinfo {author} {\bibfnamefont {A.}~\bibnamefont {Rai}}, \ and\ \bibinfo {author} {\bibfnamefont {R.}~\bibnamefont {Chaves}},\ }\href {\doibase 10.1103/physreva.100.012102} {\bibfield  {journal} {\bibinfo  {journal} {Physical Review A}\ }\textbf {\bibinfo {volume} {100}} (\bibinfo {year} {2019}),\ 10.1103/physreva.100.012102}\BibitemShut {NoStop}%
\bibitem [{\citenamefont {Eftaxias}\ \emph {et~al.}(2023)\citenamefont {Eftaxias}, \citenamefont {Weilenmann},\ and\ \citenamefont {Colbeck}}]{Eftaxias_2023}%
  \BibitemOpen
  \bibfield  {author} {\bibinfo {author} {\bibfnamefont {G.}~\bibnamefont {Eftaxias}}, \bibinfo {author} {\bibfnamefont {M.}~\bibnamefont {Weilenmann}}, \ and\ \bibinfo {author} {\bibfnamefont {R.}~\bibnamefont {Colbeck}},\ }\href {\doibase 10.1103/physrevlett.130.100201} {\bibfield  {journal} {\bibinfo  {journal} {Physical Review Letters}\ }\textbf {\bibinfo {volume} {130}} (\bibinfo {year} {2023}),\ 10.1103/physrevlett.130.100201}\BibitemShut {NoStop}%
\bibitem [{\citenamefont {Naik}\ \emph {et~al.}(2023)\citenamefont {Naik}, \citenamefont {Sidhardh}, \citenamefont {Sen}, \citenamefont {Roy}, \citenamefont {Rai},\ and\ \citenamefont {Banik}}]{Naik_2023}%
  \BibitemOpen
  \bibfield  {author} {\bibinfo {author} {\bibfnamefont {S.~G.}\ \bibnamefont {Naik}}, \bibinfo {author} {\bibfnamefont {G.~L.}\ \bibnamefont {Sidhardh}}, \bibinfo {author} {\bibfnamefont {S.}~\bibnamefont {Sen}}, \bibinfo {author} {\bibfnamefont {A.}~\bibnamefont {Roy}}, \bibinfo {author} {\bibfnamefont {A.}~\bibnamefont {Rai}}, \ and\ \bibinfo {author} {\bibfnamefont {M.}~\bibnamefont {Banik}},\ }\href {\doibase 10.1103/physrevlett.130.220201} {\bibfield  {journal} {\bibinfo  {journal} {Physical Review Letters}\ }\textbf {\bibinfo {volume} {130}} (\bibinfo {year} {2023}),\ 10.1103/physrevlett.130.220201}\BibitemShut {NoStop}%
\bibitem [{\citenamefont {Høyer}\ \emph {et~al.}(2026)\citenamefont {Høyer}, \citenamefont {Rashid},\ and\ \citenamefont {Ud~Din}}]{Hoyer_2026}%
  \BibitemOpen
  \bibfield  {author} {\bibinfo {author} {\bibfnamefont {P.}~\bibnamefont {Høyer}}, \bibinfo {author} {\bibfnamefont {J.}~\bibnamefont {Rashid}}, \ and\ \bibinfo {author} {\bibfnamefont {R.}~\bibnamefont {Ud~Din}},\ }in\ \href {\doibase 10.1109/qcnc69040.2026.00049} {\emph {\bibinfo {booktitle} {2026 International Conference on Quantum Communications, Networking, and Computing (QCNC)}}}\ (\bibinfo  {publisher} {IEEE},\ \bibinfo {year} {2026})\ p.\ \bibinfo {pages} {291–297}\BibitemShut {NoStop}%
\bibitem [{\citenamefont {Ulu}\ \emph {et~al.}(2025)\citenamefont {Ulu}, \citenamefont {Brunner},\ and\ \citenamefont {Weilenmann}}]{Ulu_2025}%
  \BibitemOpen
  \bibfield  {author} {\bibinfo {author} {\bibfnamefont {B.}~\bibnamefont {Ulu}}, \bibinfo {author} {\bibfnamefont {N.}~\bibnamefont {Brunner}}, \ and\ \bibinfo {author} {\bibfnamefont {M.}~\bibnamefont {Weilenmann}},\ }\href {\doibase 10.1103/f8jc-q1kg} {\bibfield  {journal} {\bibinfo  {journal} {Physical Review Letters}\ }\textbf {\bibinfo {volume} {135}} (\bibinfo {year} {2025}),\ 10.1103/f8jc-q1kg}\BibitemShut {NoStop}%
\bibitem [{\citenamefont {Tsirelson}(1993)}]{tsirelson}%
  \BibitemOpen
  \bibfield  {author} {\bibinfo {author} {\bibfnamefont {B.~S.}\ \bibnamefont {Tsirelson}},\ }\href@noop {} {\bibfield  {journal} {\bibinfo  {journal} {Hadronic Journal Supplement}\ }\textbf {\bibinfo {volume} {8}},\ \bibinfo {pages} {329} (\bibinfo {year} {1993})}\BibitemShut {NoStop}%
\bibitem [{\citenamefont {Popescu}\ and\ \citenamefont {Rohrlich}(1994)}]{PRbox}%
  \BibitemOpen
  \bibfield  {author} {\bibinfo {author} {\bibfnamefont {S.}~\bibnamefont {Popescu}}\ and\ \bibinfo {author} {\bibfnamefont {D.}~\bibnamefont {Rohrlich}},\ }\href {\doibase 10.1007/BF02058098} {\bibfield  {journal} {\bibinfo  {journal} {Foundations of Physics}\ }\textbf {\bibinfo {volume} {24}},\ \bibinfo {pages} {379} (\bibinfo {year} {1994})}\BibitemShut {NoStop}%
\bibitem [{\citenamefont {Pironio}\ \emph {et~al.}(2011)\citenamefont {Pironio}, \citenamefont {Bancal},\ and\ \citenamefont {Scarani}}]{Pironio_2011}%
  \BibitemOpen
  \bibfield  {author} {\bibinfo {author} {\bibfnamefont {S.}~\bibnamefont {Pironio}}, \bibinfo {author} {\bibfnamefont {J.-D.}\ \bibnamefont {Bancal}}, \ and\ \bibinfo {author} {\bibfnamefont {V.}~\bibnamefont {Scarani}},\ }\href {\doibase 10.1088/1751-8113/44/6/065303} {\bibfield  {journal} {\bibinfo  {journal} {Journal of Physics A: Mathematical and Theoretical}\ }\textbf {\bibinfo {volume} {44}},\ \bibinfo {pages} {065303} (\bibinfo {year} {2011})}\BibitemShut {NoStop}%
\bibitem [{\citenamefont {Hardy}(1993)}]{PhysRevLett.71.1665}%
  \BibitemOpen
  \bibfield  {author} {\bibinfo {author} {\bibfnamefont {L.}~\bibnamefont {Hardy}},\ }\href {\doibase 10.1103/PhysRevLett.71.1665} {\bibfield  {journal} {\bibinfo  {journal} {Phys. Rev. Lett.}\ }\textbf {\bibinfo {volume} {71}},\ \bibinfo {pages} {1665} (\bibinfo {year} {1993})}\BibitemShut {NoStop}%
\bibitem [{Note2()}]{Note2}%
  \BibitemOpen
  \bibinfo {note} {Note that the CH inequality is equivalent to the CHSH inequality as $I(P)=(I_{CHSH}(P)-2)/4$ \cite {Brunner_2014}.}\BibitemShut {Stop}%
\bibitem [{Note3()}]{Note3}%
  \BibitemOpen
  \bibinfo {note} {This form of catalysis is usually referred as ``marginal catalysis''. One could in principle also consider strict catalysis, i.e. requiring $P'_{tot} = P'_{\protect \mathrm {tar}}\otimes P'_c$, but we believe that activation would be impossible in this case, following the argument of Ref.~\cite {Karvonen_2021}}\BibitemShut {NoStop}%
\bibitem [{\citenamefont {Bavaresco}\ \emph {et~al.}(2025)\citenamefont {Bavaresco}, \citenamefont {Brunner}, \citenamefont {Girardin}, \citenamefont {Lipka-Bartosik},\ and\ \citenamefont {Sekatski}}]{Bavaresco_2025}%
  \BibitemOpen
  \bibfield  {author} {\bibinfo {author} {\bibfnamefont {J.}~\bibnamefont {Bavaresco}}, \bibinfo {author} {\bibfnamefont {N.}~\bibnamefont {Brunner}}, \bibinfo {author} {\bibfnamefont {A.}~\bibnamefont {Girardin}}, \bibinfo {author} {\bibfnamefont {P.}~\bibnamefont {Lipka-Bartosik}}, \ and\ \bibinfo {author} {\bibfnamefont {P.}~\bibnamefont {Sekatski}},\ }\href {\doibase 10.1103/5dth-7zm8} {\bibfield  {journal} {\bibinfo  {journal} {Physical Review Letters}\ }\textbf {\bibinfo {volume} {135}} (\bibinfo {year} {2025}),\ 10.1103/5dth-7zm8}\BibitemShut {NoStop}%
\bibitem [{\citenamefont {Yamasaki}\ \emph {et~al.}(2022)\citenamefont {Yamasaki}, \citenamefont {Morelli}, \citenamefont {Miethlinger}, \citenamefont {Bavaresco}, \citenamefont {Friis},\ and\ \citenamefont {Huber}}]{Yamasaki2022}%
  \BibitemOpen
  \bibfield  {author} {\bibinfo {author} {\bibfnamefont {H.}~\bibnamefont {Yamasaki}}, \bibinfo {author} {\bibfnamefont {S.}~\bibnamefont {Morelli}}, \bibinfo {author} {\bibfnamefont {M.}~\bibnamefont {Miethlinger}}, \bibinfo {author} {\bibfnamefont {J.}~\bibnamefont {Bavaresco}}, \bibinfo {author} {\bibfnamefont {N.}~\bibnamefont {Friis}}, \ and\ \bibinfo {author} {\bibfnamefont {M.}~\bibnamefont {Huber}},\ }\href {\doibase 10.22331/q-2022-04-25-695} {\bibfield  {journal} {\bibinfo  {journal} {{Quantum}}\ }\textbf {\bibinfo {volume} {6}},\ \bibinfo {pages} {695} (\bibinfo {year} {2022})}\BibitemShut {NoStop}%
\bibitem [{\citenamefont {Palazuelos}\ and\ \citenamefont {Vicente}(2022)}]{Palazuelos2022}%
  \BibitemOpen
  \bibfield  {author} {\bibinfo {author} {\bibfnamefont {C.}~\bibnamefont {Palazuelos}}\ and\ \bibinfo {author} {\bibfnamefont {J.~I.~d.}\ \bibnamefont {Vicente}},\ }\href {\doibase 10.22331/q-2022-06-13-735} {\bibfield  {journal} {\bibinfo  {journal} {{Quantum}}\ }\textbf {\bibinfo {volume} {6}},\ \bibinfo {pages} {735} (\bibinfo {year} {2022})}\BibitemShut {NoStop}%
\bibitem [{\citenamefont {Weinbrenner}\ \emph {et~al.}(2025)\citenamefont {Weinbrenner}, \citenamefont {Baksová}, \citenamefont {Denker}, \citenamefont {Morelli}, \citenamefont {Yu}, \citenamefont {Friis},\ and\ \citenamefont {Gühne}}]{weinbrenner2025}%
  \BibitemOpen
  \bibfield  {author} {\bibinfo {author} {\bibfnamefont {L.~T.}\ \bibnamefont {Weinbrenner}}, \bibinfo {author} {\bibfnamefont {K.}~\bibnamefont {Baksová}}, \bibinfo {author} {\bibfnamefont {S.}~\bibnamefont {Denker}}, \bibinfo {author} {\bibfnamefont {S.}~\bibnamefont {Morelli}}, \bibinfo {author} {\bibfnamefont {X.-D.}\ \bibnamefont {Yu}}, \bibinfo {author} {\bibfnamefont {N.}~\bibnamefont {Friis}}, \ and\ \bibinfo {author} {\bibfnamefont {O.}~\bibnamefont {Gühne}},\ }\href {https://arxiv.org/abs/2412.18331} {\enquote {\bibinfo {title} {Superactivation and incompressibility of genuine multipartite entanglement},}\ } (\bibinfo {year} {2025}),\ \Eprint {http://arxiv.org/abs/2412.18331} {arXiv:2412.18331 [quant-ph]} \BibitemShut {NoStop}%
\bibitem [{\citenamefont {Brunner}\ \emph {et~al.}(2014)\citenamefont {Brunner}, \citenamefont {Cavalcanti}, \citenamefont {Pironio}, \citenamefont {Scarani},\ and\ \citenamefont {Wehner}}]{Brunner_2014}%
  \BibitemOpen
  \bibfield  {author} {\bibinfo {author} {\bibfnamefont {N.}~\bibnamefont {Brunner}}, \bibinfo {author} {\bibfnamefont {D.}~\bibnamefont {Cavalcanti}}, \bibinfo {author} {\bibfnamefont {S.}~\bibnamefont {Pironio}}, \bibinfo {author} {\bibfnamefont {V.}~\bibnamefont {Scarani}}, \ and\ \bibinfo {author} {\bibfnamefont {S.}~\bibnamefont {Wehner}},\ }\href {\doibase 10.1103/revmodphys.86.419} {\bibfield  {journal} {\bibinfo  {journal} {Reviews of Modern Physics}\ }\textbf {\bibinfo {volume} {86}},\ \bibinfo {pages} {419–478} (\bibinfo {year} {2014})}\BibitemShut {NoStop}%
\bibitem [{\citenamefont {Karvonen}(2021)}]{Karvonen_2021}%
  \BibitemOpen
  \bibfield  {author} {\bibinfo {author} {\bibfnamefont {M.}~\bibnamefont {Karvonen}},\ }\href {\doibase 10.1103/physrevlett.127.160402} {\bibfield  {journal} {\bibinfo  {journal} {Physical Review Letters}\ }\textbf {\bibinfo {volume} {127}} (\bibinfo {year} {2021}),\ 10.1103/physrevlett.127.160402}\BibitemShut {NoStop}%
\bibitem [{\citenamefont {Pironio}(2005)}]{pironio_lifting}%
  \BibitemOpen
  \bibfield  {author} {\bibinfo {author} {\bibfnamefont {S.}~\bibnamefont {Pironio}},\ }\href {\doibase 10.1063/1.1928727} {\bibfield  {journal} {\bibinfo  {journal} {Journal of Mathematical Physics}\ }\textbf {\bibinfo {volume} {46}},\ \bibinfo {pages} {062112} (\bibinfo {year} {2005})}\BibitemShut {NoStop}%
\end{thebibliography}%

\appendix

\section{The tripartite GMNL witness $I^{169}_{\text{GMNL}}$}\label{app:explicit_ineq}
We define the correlators in this scenario as
\begin{align}\nonumber
    &\langle A_1^{x_1}A_2^{x_2}A_3^{x_3}\rangle \\&= \sum_{a_1,a_2,a_3}(-1)^{a_1+a_2+a_3}P(a_1,a_2,a_3|x_1,x_2,x_3)
\end{align}
such that the single-party and two-party correlators are similarly given by $\langle A_i^{x_i}\rangle = \sum_{a_i}(-1)^{a_i}P(a_i|x_i)$ and $\langle A_i^{x_i}A_j^{x_j}\rangle = \sum_{a_i,a_j}(-1)^{a_i + a_j}P(a_i,a_j|x_i,x_j)$ respectively where $i,j\in\{1,2,3\}$. Using these correlators we express the GMNL inequality of class 169 from \cite{Bancal2013} (after appropriate relabelling) as
\begin{align}\nonumber
  I^{169}_{\text{GMNL}}(P)=& -16 -   \langle A_1^{0}\rangle + \langle A_1^{1}\rangle -\langle A_2^{0}\rangle + \langle A_2^{1}\rangle - \langle A_1^{0}A_2^{1}\rangle\\\nonumber& - 3\langle A_1^{1}A_2^{0}\rangle +4\langle A_3^{0}\rangle + \langle A_1^{0}A_3^{0}\rangle + \langle A_1^{1}A_3^{0}\rangle\\\nonumber&-2\langle A_2^{0}A_3^{0}\rangle + 3\langle A_1^{0}A_2^{0}A_3^{0}\rangle + 5\langle A_1^{1}A_2^{0}A_3^{0}\rangle\\\nonumber&
  +4\langle A_1^{0}A_2^{1}A_3^{0}\rangle -4\langle A_1^{1}A_2^{1}A_3^{0}\rangle-2\langle A_1^{1}A_3^{1}\rangle\\\nonumber& +\langle A_2^{0}A_3^{1}\rangle+\langle A_2^{1}A_3^{1}\rangle + 3\langle A_1^{0}A_2^{0}A_3^{1}\rangle\\\nonumber & -2\langle A_1^{1}A_2^{0}A_3^{1}\rangle+3\langle A_1^{0}A_2^{1}A_3^{1}\rangle+4\langle A_1^{1}A_2^{1}A_3^{1}\rangle\\&\leq0.
\end{align}

\section{Constructing the GMNL witnesses}
\label{app:lifting}
This appendix follows the procedure introduced in~\cite{Curchod_2019} to construct the GNML witness \eqref{top_ineq}. This witness was already known~\cite{Contreras_Tejada_2021} and is rederived here for completeness.

This procedure is based on lifting Bell inequalities~\cite{pironio_lifting}, which is a way of turning a Bell inequality for a small number of parties inputs and outcomes into an inequality for a larger scenario with more parties, inputs or outcomes by embedding the original Bell test into such a larger scenario while fixing the extra parties to specific inputs and outputs. Nonlocality in the original test  implies nonlocality in the larger scenario. Lifting by itself is not sufficient for witnessing GMNL. 

Hence, in 
 \cite{Curchod_2019}, the authors extend the lifting technique to a construction of GMNL witnesses.
 Specifically, in a $K$-party Bell setup, they lift the bipartite CH inequalities between one party $A_1$ and each of the parties $A_k$ to $K$ parties. They sum the lifted inequalities to obtain a single one and, because this inequality could also be violated by biseparable correlations, they add correction terms that cancel any such violations out, in order to obtain a GMNL witness.

In the following we prove that \eqref{top_ineq} is  satisfied by all $K$-party biseparable non-signalling correlations and it thus a GMNL witness. 
Let $\mathcal{S} = \{\beta \subset \{1,...,K\}: 1\in\beta,\beta\neq \{1,...,K\}\}$, let $\beta\in\mathcal{S}$ and let $\bar \beta = \{1,...,K\}\setminus\beta$ be its complement. Let $P_{ext}$ be any extremal point of the biseparable non-signalling set $\mathcal{N}_{K}$. Then, we can find at least one bipartition $\beta|\bar \beta$ in the $K-$party Bell scenario such that
\begin{equation}
    P_{ext} = P_{\beta}(\vec a_{\beta}|\vec x_{\beta})P_{\bar \beta}(\vec a_{\bar \beta}| \vec x_{\bar \beta})
\end{equation}
where $\vec a =(a_i)_{i=1}^{K}$, $\vec x =(x_i)_{i=1}^{K}$, $\vec a_{\beta} = (a_i)_{i\in \beta}$, $\vec a_{\bar \beta} = (a_i)_{i\in \bar \beta}$, $\vec x_{\beta} = (x_i)_{i\in \beta}$ and $\vec x_{\bar \beta} = (x_i)_{i\in \bar \beta}$. Furthermore, let $j\in\bar \beta$, 
then we can rewrite \eqref{top_ineq} as
\begin{align}
    \nonumber I_K  \left( P_{ext} \right) &= I\left(Q^j_{P_{ext}}\right)  -  Q^j_{P_{ext}}(00|00) \nonumber \\
    &+  P_{ext}(0^{{ K-1}},0\ldots0|0^{{ K-1}},0\ldots 0) \nonumber\\&
    \nonumber+\sum_{\substack{k\in\bar \beta\\ k\neq j}} \left(I\left(Q^k_{P_{ext}}\right)-Q^k_{P_{ext}}(00|00)\right)  \\&
    + \!\!\!\! \sum_{k\in\beta\setminus \{1\}} \!\! \!\!\!\left(I\left(Q^k_{P_{ext}}\right)-Q^k_{P_{ext}}(00|00)\right) \label{main_ap_a}
\end{align}

In the following we will show that each summand in this expression is non-positive and hence that $I_K\left(P_{ext}\right) \leq 0$. 
Let us start with $I\left(Q^j_{P_{ext}}\right)$. Let $k\in \bar \beta$. Since $1\in \beta$, we can define $\vec a_{\beta \setminus \{1\}} = (a_i)_{i\in \beta \setminus \{1\}}$ and $\vec a_{\bar \beta \setminus \{k\}}=(a_i)_{i\in \bar \beta \setminus \{k\}}$ such that
\begin{equation}
    \vec a_{\beta} = (a_1,\vec a_{\beta \setminus \{1\}})\;\text{ and }\;
    \vec a_{\bar \beta} = (a_k,\vec a_{\bar \beta \setminus \{k\}}).
\end{equation}
Therefore, we can write
\begin{equation}
    P_{ext}(\vec a| \vec x) = P_\beta(a_1,\vec a_{\beta \setminus \{1\}}|\vec x_{\beta})P_{\bar \beta} (a_k,\vec a_{\bar \beta \setminus \{k\}}|\vec x_{\bar \beta}).
\end{equation}
Inserting this into \eqref{marginals} we can easily see that the lifted quantities factorize as
\begin{equation}\label{factorize}
    Q^k_{P_{ext}}(u,v|p,q) = R_k(u|p)S_k(v|q),
\end{equation}
where
\begin{align}\label{R}
&R_k(u|p)=\\&\sum_{\substack{a_1\in\{0,1\}^{K-1}\\:a_1^k=u}}P\left(a_1,\vec a_{\beta \setminus \{1\}}=0|x_1=pe_k,\vec x_{\beta \setminus \{1\}}=0\right)\nonumber
\end{align}
and
\begin{align}\label{S}
S_k(v|q)=P\left(a_k=v,\vec a_{\bar \beta \setminus \{k\}}=0|x_{k}=q,\vec x_{\bar \beta \setminus \{k\}}=0\right).
\end{align}
If $Q^k_{P_{ext}}$ was a probability distribution then $I(Q^k_{P_{ext}})\leq 0$ would hold in general as any distribution $P(u,v|p,q)=P(u|p)P(v|q)$ that is product across its inputs and outputs is by definition locally reproducible, i.e. in the local set $P\in \mathcal{L}$ and cannot violate the CH inequality $I(P)\leq 0$. However, in our construction $Q^k_{P_{ext}}$ is not a probability distribution therefore we will first show that this statement still holds. Looking at \eqref{R} we see that

\begin{align}
    &\sum_{u\in\{0,1\}}R_k(u|p) \\&= \sum_{a_1\in\{0,1\}^{K-1}}P_{\beta}\left( a_1,\vec a_{\beta \setminus \{1\}}=0|x_1=pe_k,\vec x_{\beta \setminus \{1\}}=0\right),\nonumber
\end{align}
and therefore,

\begin{align}
    \sum_{u\in\{0,1\}}R_k(u|p) = P_{\beta}\left( \vec a_{\beta \setminus \{1\}}=0|x_1=pe_k,\vec x_{\beta \setminus \{1\}}=0\right). 
\end{align}
Since $P_\beta$ is by definition a non-signalling distribution it satisfies
\begin{align}
    P_{\beta}\left( \vec a_{\beta \setminus \{1\}}=0|x_1=pe_k,\vec x_{\beta \setminus \{1\}}=0\right)=c^{\beta}_k \quad \forall p\in\{0,1\},
\end{align}
where $0\leq c^{\beta}_k \leq 1$ is a constant independent of $p$. Therefore,
\begin{align}
    \sum_{u\in\{0,1\}}R_k(u|p) = c_k^{\beta}
\end{align}
and 
\begin{equation}\label{ns1}
    \tilde R_k(u|p) := \frac{R_k(u|p)}{c_k^{\beta}}
\end{equation}
is a probability distribution as $\sum_{u}\tilde R_k(u|p)=1$ for all $p\in\{0,1\}$ and $0\leq \tilde R_k(u|p) \leq 1$. Similarly, from \eqref{S} we have
\begin{align}
    \sum_{v\in\{0,1\}}S_k(v|q)=P_{\bar \beta}\left(\vec a_{\bar \beta \setminus \{k\}}=0|x_k=q,\vec x_{\bar \beta \setminus \{k\}}=0\right)
\end{align}
and since $P_{\bar \beta}$ is also non-signalling by definition we have 
\begin{align}
    \sum_{v\in\{0,1\}}S_k(v|q) = c_k^{\bar \beta}
\end{align}
for some constant $0\leq c_k^{\bar \beta}\leq 1$ and thus,
\begin{equation}\label{ns2}
    \tilde S_k(v|q) := \frac{S_k(v|q)}{c_k^{\bar \beta}}
\end{equation}
is a probability distribution. Combining \eqref{ns1} and \eqref{ns2} with \eqref{factorize} we get
\begin{align}
    \tilde Q^k_{P_{ext}}(u,v|p,q)&:=\frac{Q^k_{P_{ext}}(u,v|p,q)}{c_k^{ \beta}c_k^{\bar \beta}} \\
    &=\tilde R_k(u|p)\tilde S_k(v|q),
\end{align}
which is a probability distribution. Therefore, by linearity of $I$,
\begin{equation}
I\left(Q^k_{P_{ext}}\right)=c_k^{ \beta}c_k^{\bar \beta}I\left(\tilde Q^k_{P_{ext}}\right).
\end{equation}
and since $\tilde Q^k_{P_{ext}}(u,v|p,q)$ is a probability distribution that factorizes as
\begin{equation}
\tilde Q^k_{P_{ext}}(u,v|p,q) = \tilde R_k(u|p)\tilde S_k(v|q),
\end{equation}
we have $I\left(\tilde Q^k_{P_{ext}}\right)\leq 0$ and thus 
\begin{equation}\label{first_third}
I\left( Q^k_{P_{ext}}\right)\leq 0\quad \forall k\in \bar\beta
\end{equation}
and since $j\in\bar \beta$ we get
\begin{equation}\label{first_sum}
    I\left( Q^j_{P_{ext}}\right)\leq 0
\end{equation}
for the first term in \eqref{main_ap_a}. 

For the second and third term in \eqref{main_ap_a} we have $j\in \bar \beta$ and let $\vec a_{\bar j} = (a_i)_{i\in\{1,...,K\}\setminus \{1,j\}}$. From \eqref{marginals} we have
\begin{align}
&Q^j_{P_{ext}}(00|00) =\\& \sum_{\substack{a_1\in \{0,1\}^{K-1}\\a_1^j=0}}P_{ext}(a_1,a_j=0,\vec a_{\bar j}=0|x_1=0,x_j=0,\vec x_{\bar j}=0).\nonumber
\end{align}
This sum contains the term $P_{ext}(0^{K-1},0,...,0|0^{K-1},0,...,0)$ and, therefore,
\begin{equation}\label{second_sum}
    P_{ext}(0^{K-1},0,...,0|0^{K-1},0,...,0)-Q^j_{P_{ext}}(00|00)\leq 0.
\end{equation}
For the fourth term, we know from \eqref{first_third} that
\begin{equation}
    \sum_{k\in \bar\beta}I\left( Q^k_{P_{ext}}\right)\leq 0
\end{equation}
Furthermore, it is easy to deduce from \eqref{marginals} that since $Q^k_{P_{ext}}(\cdot|\cdot)$ is a sum of probabilities we have $Q^k_{P_{ext}}(00|00)\geq 0$ for all $k\in\bar \beta$. Thus,
\begin{equation}\label{third_sum}
    \sum_{\substack{k\in\bar \beta\\ k\neq j}}\left(I\left(Q^k_{P_{ext}}\right)-Q^k_{P_{ext}}(00|00)\right)\leq 0.
\end{equation}
For the last term in \eqref{main_ap_a} let $k\in \beta \setminus \{1\}$. We rearrange the CH functional \eqref{CH_func} as
\begin{align}
    &I\left(Q^k_{P_{ext}}\right)-Q^k_{P_{ext}}(00|00)=\\&
    -Q^k_{P_{ext}}(01|01)-Q^k_{P_{ext}}(10|10)-Q^k_{P_{ext}}(00|11).\nonumber
\end{align}
Since, the lifted quantities are a summation of probabilities $Q^k_{P_{ext}}(\cdot|\cdot)\geq 0$ and thus,
\begin{equation}
    I\left(Q^k_{P_{ext}}\right)-Q^k_{P_{ext}}(00|00)\leq 0 \quad \forall k\in \beta \setminus \{1\}
\end{equation}
Thus, for this final term
\begin{equation}\label{fourth_sum}
    \sum_{k\in\beta\setminus \{1\}}\left(I\left(Q^k_{P_{ext}}\right)-Q^k_{P_{ext}}(00|00)\right)\leq 0.
\end{equation}
From \eqref{first_sum},\eqref{second_sum},\eqref{third_sum} and \eqref{fourth_sum} we see that all terms making up the expression in \eqref{main_ap_a} are non-positive and hence, $I_K\left(P_{ext}\right)\leq 0$ for all extremal biseparable points $P_{ext}$ of $\mathcal{N}_K$. Since the biseparable non-signalling set is a polytope for any distribution $P$ inside that polytope we also have $I_K\left(P\right)\leq 0$. \qed

\section{Proof of Proposition-\ref{prop1}} \label{app:prop1}

We first provide the following Lemma.

\begin{lem}\label{lem_B1} Let $P(a_1,\ldots,a_K|x_1,\ldots,x_K)$ be a $K$-partite distribution in the scenario of the main text, with $a_1=(a_1^2,\ldots,a_1^K)\in\{0,1\}^{K-1}$. Assume that there exists $j\in\{2,\ldots,K\}$ such that for all inputs $(x_1,\ldots,x_K)$, $P(a_j=1|x_1,\ldots,x_K)=1$ and $P(a_1^j=1|x_1,\ldots,x_K)=1$. Then the witness $I_K$ defined in \eqref{top_ineq} satisfies $I_K(P)=0$.
\end{lem}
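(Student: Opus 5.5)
Every term of $I_K(P)$ in \eqref{top_ineq} evaluates $P$ (or a partial sum of $P$ over $a_1$) at output values where the fixed coordinates are $0$. The hypothesis says $a_j=1$ and $a_1^j=1$ almost surely for every input. So the plan is to sort the terms of $I_K(P)$ by whether they force $a_j=0$ or $a_1^j=0$, set those to zero, and check that the rest cancel exactly.

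\textbf{Terms that vanish.} I would take the terms in three groups.
\begin{itemize}
\item The full-zero term $P(0^{K-1},0,\ldots,0|0^{K-1},0,\ldots,0)$ requires $a_j=0$, so it is $0$.
\item For $k=j$, every term of $I(Q^P_j)$ and $Q^P_j(0,0|0,0)$ has the form $Q^P_j(u,v|p,q)$ with $(u,v)\in\{(0,0),(0,1),(1,0)\}$. If $v=0$, the defining sum \eqref{marginals} asks for $a_j=0$, which has zero probability. If $u=0$, every summand has $a_1^j=0$, which also has zero probability. Each of these three output pairs has $u=0$ or $v=0$, so $I(Q^P_j)=0$ and $Q^P_j(0,0|0,0)=0$.
\item For $k\neq j$, the definition of $Q^P_k$ fixes $\vec a_{\not k}=0$, which includes $a_j=0$. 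Hence every $Q^P_k(\cdot,\cdot|\cdot,\cdot)$ with $k\neq j$ is zero.
\end{itemize}

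\textbf{Summing up.} Adding the three groups gives $I_K(P)=0$. This uses only nonnegativity of probabilities and the almost-sure conditions, so no non-signalling or separability assumption is needed.

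\textbf{Main point of care.} The hypothesis is stated as marginals under each full input, $P(a_j=1|x_1,\ldots,x_K)=1$ and $P(a_1^j=1|x_1,\ldots,x_K)=1$. I would make explicit that a marginal equal to $1$ means every joint outcome with $a_j=0$ (respectively $a_1^j=0$) has probability $0$ for that same input. Applying this at the specific inputs $x_1=pe_k$, $x_k=q$, $\vec x_{\not k}=0$ then justifies dropping every summand in \eqref{marginals}.
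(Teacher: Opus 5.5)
Your proposal is correct and follows the paper's proof essentially step for step. The paper uses the same three-way split. The terms with $k\neq j$ vanish because $\vec a_{\not k}=0$ forces $a_j=0$. Each $k=j$ term has $u=0$ or $v=0$. The all-zero term vanishes because it requires $a_j=0$. Your explicit remark that a marginal equal to one forces every joint outcome with $a_j=0$ (or $a_1^j=0$) to have probability zero at each fixed input is a small clarification the paper leaves implicit.
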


\begin{proof}
  Recall the lifted bipartite quantities $Q_k^P$ from \eqref{marginals} for each $k\in\{2,\ldots,K\}$. We will first fix $k\neq j$ and show that $Q_k^P=0$ for all $k\neq j$. In every summand in \eqref{marginals}, $a_m=0$ holds for all $m\neq k$, in particular for $a_j=0$. Since $a_j=1$ deterministically by hypothesis, each such term is zero, hence $Q_k^P(u,v|p,q)=0$ for all $u,v,p,q$ and all $k\neq j$. Therefore, for any $k\neq j$, we have  $Q_k^P(00|00)=0$ and $I(Q_k^P)=0$.
  
  Next, we show that also $I(Q_j^P)=0$ and $Q_j^P(00|00)=0$. By hypothesis $a_1^j=1$ and $a_j=1$ deterministically, so $Q_j^P(u,v|p,q)=0$ whenever $u=0$ or $v=0$. In particular, 
\begin{equation}
Q_j^P(00|00)=Q_j^P(01|01)=Q_j^P(10|10)=Q_j^P(00|11)=0,
\end{equation}

which directly implies that $I(Q_j^P)=0$ and $Q_j^P(00|00)=0$. Thus, the two sums in \eqref{top_ineq} are zero. Furthermore, $P\left(0^{K-1},0,\ldots,0 \middle|\, 0^{K-1},0,\ldots,0\right)=0$ holds for $k\neq j$ and $k= j$, since $a_j =1$ deterministically. Hence, this term also equals zero concluding the proof of this lemma.   
\end{proof}

We will now provide another Lemma, which will be useful to identify the positive contributions to the GMNL witness.
\begin{lem}\label{lem_B2}Let $P(a_1,\ldots,a_K|x_1,\ldots,x_K)$ be a $K$-partite distribution and let $Q_k^P(u,v|p,q)$ be the lifted bipartite quantities defined in \eqref{marginals} for each $k\in\{2,\ldots,K\}$. Assume that for every $k\in\{2,\ldots,K\}$,
\[
Q_k^P(u,v|p,q)=c_k P_H(u,v|p,q),
\]
where $c_k$ is a constant and $P_H$ is a Hardy correlation satisfying
\begin{align}\nonumber
&P_H(00|00)=p(t)>0\\&P_H(01|01)=P_H(10|10)=P_H(00|11)=0. \label{eq:hardy}
\end{align}
Then the witness $I_K$ defined in \eqref{top_ineq} satisfies $I_K(P)\ge0$ and in particular for 
\begin{equation}\label{this_dist}  P(a_1,...,a_K|x_1,...,x_K)=\prod_{k=2}^KP_H(a_1^k,a_k|x_1^k,x_k)
\end{equation}
the witness in \eqref{top_ineq} satisfies $I_K(P)=(p(t))^{K-1}>0$.
\end{lem}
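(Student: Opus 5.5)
Under the hypothesis, every term of the witness can be rewritten in terms of $P_H$, so the plan is to substitute $Q_k^P=c_kP_H$ directly into \eqref{top_ineq} and treat the two claims in turn.

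\textbf{First claim.} By \eqref{CH_func} and \eqref{eq:hardy},
\begin{align}\nonumber
I(Q_k^P)=c_k\bigl[P_H(00|00)-P_H(01|01)-P_H(10|10)-P_H(00|11)\bigr]=c_k\,p(t).
\end{align}
Since $Q_k^P(0,0|0,0)=c_k\,p(t)$ as well, the two sums in \eqref{top_ineq} cancel term by term: $\sum_k I(Q_k^P)-\sum_k Q_k^P(00|00)=0$. What remains is
\begin{align}\nonumber
I_K(P)=P\left(0^{K-1},0,\ldots,0\,\middle|\,0^{K-1},0,\ldots,0\right),
\end{align}
which is a probability and hence non-negative. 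This gives $I_K(P)\ge 0$. No sign assumption on $c_k$ is needed, because the cancellation is exact.

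\textbf{Second claim.} It suffices to check that the product distribution \eqref{this_dist} meets the hypothesis and to evaluate the surviving term. Fix $k$ and compute $Q_k^P(u,v|p,q)$ from \eqref{marginals}. The inputs are $x_1^k=p$, $x_k=q$, and every other $x_1^m$ and $x_m$ equals $0$. The outputs are constrained by $a_1^k=u$, $a_k=v$ and $a_m=0$ for $m\neq k$, while $a_1^m$ is summed for $m\neq k$. Because $P$ factorizes over the pairs $(a_1^m,a_m)$, the sum splits as
\begin{align}\nonumber
Q_k^P(u,v|p,q)=P_H(u,v|p,q)\prod_{m\neq k}\sum_{a_1^m}P_H(a_1^m,0|0,0).
\end{align}
Each factor in the product is the marginal $P_H(a_m=0|x_m=0)$, which is well defined because $P_H$ is non-signalling. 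So $Q_k^P=c_kP_H$ with the constant $c_k=\prod_{m\neq k}P_H(a_m=0|x_m=0)$, independent of $(u,v,p,q)$, and the first claim applies. The surviving term is
\begin{align}\nonumber
P\left(0^{K-1},0,\ldots,0\,\middle|\,0^{K-1},0,\ldots,0\right)=\prod_{k=2}^K P_H(0,0|0,0)=p(t)^{K-1},
\end{align}
so $I_K(P)=(p(t))^{K-1}>0$.

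The only delicate step is confirming that $c_k$ is a genuine constant. This needs the Hardy box to be non-signalling, so that summing over $a_1^m$ with $x_1^m=0$ gives a marginal independent of the other inputs. Apart from this, the argument is bookkeeping of which indices are fixed and which are summed in \eqref{marginals}.
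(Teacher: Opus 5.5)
Your proof is correct and follows the paper's argument essentially step for step: you substitute $Q_k^P=c_kP_H$, use $I(P_H)=P_H(00|00)$ to cancel the two sums, and for the product box you factor the sum in \eqref{marginals} to get $c_k=\prod_{m\neq k}\sum_{a_1^m}P_H(a_1^m,0|0,0)$ and the surviving term $p(t)^{K-1}$. One small correction: non-signalling of $P_H$ is not actually the ``delicate step'', because each factor $\sum_{a_1^m}P_H(a_1^m,0|0,0)$ has all its inputs fixed at zero independently of $(u,v,p,q)$, so it is a constant for any box; this is why the paper can simply write $c_k=[P_H(00|00)+P_H(10|00)]^{K-2}$.
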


\begin{proof}
    By the assumption $Q_k^P=c_k P_H$ for all $k$,
\begin{align}\nonumber
I_K(P)&=\sum_{k=2}^K c_k I(P_H)+P(0^{K-1},0,\ldots,0|0^{K-1},0,\ldots,0)\\&-\sum_{k=2}^K c_k P_H(00|00).
\end{align}
Due to the properties \eqref{eq:hardy} of the Hardy correlations we have $I(P_H)=P_H(00|00)$. Hence the two sums cancel and
\begin{equation}
I_K(P)=P(0^{K-1},0,\ldots,0|0^{K-1},0,\ldots,0)\geq 0.
\end{equation}
For the distribution \eqref{this_dist}, 
\begin{align}
    &Q^k_P(u,v|p,q) =\\& \sum_{\substack{a_1\in\{0,1\}^{K-1}\\a_1^k=u}}\left[P_H(a^k_1,v|p,q)\prod^K_{\substack{m=2\\m\neq k}}P_H(a_1^m,0|0,0)\right]\nonumber
\end{align}
Now, $a^k_1=u$ is fixed by the summation constraint therefore defining $\vec a^*_1=(a_1^m)_{m\in\{2,...,K\}\setminus \{k\}}$ we can rewrite this as
\begin{align}
    Q^k_P(u,v|p,q) &=P_H(u,v|p,q) \! \! \! \! \! \!\sum_{\vec a_1^*\in\{0,1\}^{K-2}}\prod^K_{\substack{m=2\\m\neq k}}P_H(a_1^m,0|0,0)\nonumber \\
    &= c_k P_H(u,v|p,q)
\end{align}
for the constant 
\begin{align}
    c_k :&= \sum_{\vec a_1^*\in\{0,1\}^{K-2}}\prod^K_{\substack{m=2\\m\neq k}}P_H(a_1^m,0|0,0) \\
    &= \quad \prod^K_{\substack{m=2\\m\neq k}}\sum_{a_1^m\in\{0,1\}}P_H(a_1^m,0|0,0)\\
    &= \quad \prod^K_{\substack{m=2\\m\neq k}}\left[P_H(0,0|0,0)+P_H(1,0|0,0)\right] \\
    &= \quad \left[P_H(00|00)+P_H(10|00)\right]^{K-2}.
\end{align}

For the distribution in \eqref{this_dist},furthermore
\begin{align}
P(0^{K-1},0,\ldots,0|0^{K-1},0,\ldots,0) &=  \prod_{k=2}^KP_H(00|00) \nonumber \\
&=(p(t))^{K-1}, \nonumber
\end{align}
and $(p(t))^{K-1}>0$ concluding the proof for this lemma.
\end{proof}

We now have all ingredients to present the proof of Proposition-\ref{prop1}. For this, let us consider the following decomposition of \eqref{hardy_fam}:
\begin{equation}
    P=\frac{1}{K-1}\sum_{k=2}^{K}P^{(k)},
\end{equation}
where
\begin{align}\nonumber
P^{(k)}(a_1,\ldots,a_K|x_1,\ldots&,x_K)=P_H(a_1^k,a_k|x_1^k,x_k)\\
&\prod_{m\in\{2,\ldots,K\}\setminus\{k\}}\delta_{1,a_1^{m}}\delta_{1,a_m}.
\end{align}
for all $k\in\{2,\ldots,K\}$.

Using this decomposition for each of the $K-1$ copies of $P$ and multi-linearity of the AND wiring map $\mathcal{F}$, we can expand
\begin{align}\nonumber
P'&=\frac{1}{(K-1)^{K-1}}\times \\&\sum_{(k_1,...,k_{K-1})\in \{2,...,K\}^{K-1}}\mathcal{F}^{\left(\Vec{k}\right)}\!\left[P^{(k_1)},...,P^{(k_{K-1})}\right].\label{wired_dist_appendA}
\end{align}
where $\Vec{k} = (k_1,...,k_{K-1})$. We then define the sets
\begin{align}\nonumber
    &\mathcal{S} = \left\{\Vec{k}\in\{2,...,K\}^{K-1}:\{k_1,...,k_{K-1}\} = \{2,...,K\}\right\}\\&
    \mathcal{D} = \{2,...,K\}^{K-1}\setminus\mathcal{S}\nonumber
\end{align}
where $\mathcal{S}$ is the set of $\Vec{k}$ that are permutations of $\{2,...,K\}$ and $\mathcal{D}$ is its complement. For $\Vec{k}\in\mathcal{D}$ there exists at least one index $j\in\{2,...,K\}$ which does not appear among $k_1,...,k_{K-1}$. For this index $j$ we always have $a_1^j=1$ and $a_j=1$ for all inputs as determined by $\delta_{1,a_{1}^{j}} \delta_{1,a_j}$ and after applying the AND-wiring across all copies we still have deterministically $a_1^j=1$ and $a_j=1$. Thus,  by Lemma-\ref{lem_B1}, the GMNL witness for these terms vanishes
\begin{align}\label{IkD}
    I_K\left(\mathcal{F}^{(\Vec{k})}\left[P^{(k_1)},...,P^{(k_{K-1})}\right]\right) = 0,\;\;\;\forall \Vec{k}\in\mathcal{D}.
\end{align}

On the other hand, let $\Vec{k}\in\mathcal{S}$. Then for each $m\in\{2,...,K\}$ there exists exactly one copy $\ell$ such that $k_\ell = m$ that contributes $P_H$ while for all other copies the parties deterministically output 1 and hence, AND-wiring across copies leaves the Hardy correlations unchanged. Therefore we get
 \begin{align}
     \mathcal{F}^{(\Vec{k})}\left[P^{(k_1)},...,P^{(k_{K-1})}\right] = \prod_{k=2}^K P_H(a_1^k,a_k|x_1^k,x_k).
 \end{align}
Invoking Lemma-\ref{lem_B2}, we obtain
\begin{align}\label{IkS}
    I_K\left(\mathcal{F}^{(\Vec{k})}\left[P^{(k_1)},...,P^{(k_{K-1})}\right]\right) = p^{K-1}(t),\;\;\;\forall \Vec{k}\in\mathcal{S}.
\end{align}
There are exactly $(K-1)!$ vectors  $\Vec{k}\in\mathcal{S}$, i.e., $|\mathcal{S}|=(K-1)!$. Therefore, combining \eqref{IkD} and \eqref{IkS} with \eqref{wired_dist_appendA} we get
\begin{align}\nonumber
    I_K(P')&=\frac{|\mathcal{S}|}{(K-1)^{K-1}}p(t)^{K-1}\\&=\frac{(K-1)!}{(K-1)^{K-1}}p(t)^{K-1}>0,\label{r1ap}
\end{align}

concluding the proof for Proposition-\ref{prop1}.

\section{Proof of Proposition-\ref{prop2}}
\label{app:catalyst}
The initial distribution of system and catalyst is 
\begin{align}
    &P_{\mathrm{init}}\left(\mathbf{a}^{[0]},...,\mathbf{a}^{[n-1]},\mathbf{\tilde a}\big| \mathbf{x}^{[0]},...,\mathbf{x}^{[n-1]}\right)\\& = P\left(\mathbf{a}^{[0]}\big|\mathbf{x}^{[0]}\right) P_c\left( \mathbf{a}^{[1]},...,\mathbf{a}^{[n-1]},\mathbf{\tilde a} \big| \mathbf{x}^{[1]},..., \mathbf{x}^{[n-1]}\right), 
\end{align}
where we label the system distribution $P$ by $[0]$, and the $(n-1)$ catalyst distributions by $[1],\ldots,[n-1]$. For each distribution $[s]$, we express the local outputs as $\mathbf{a}^{[s]}=(a_1^{[s]},\ldots,a_K^{[s]})$ such that $a_k^{[s]}\in \mathcal{A}_k$, and similarly for inputs $\mathbf{x}^{[s]}=(x_1^{[s]},\ldots,x_K^{[s]})$ such that $x_k^{[s]}\in \mathcal{X}_k$. The outputs of the classical catalyst register are labelled $\mathbf{\tilde a}=(\tilde a_1,\ldots,\tilde a_K)$ with $\tilde a_k\in\{0,\ldots,n-1\}$, such that we have $\tilde a_1=\cdots=\tilde a_K=i$ deterministically.

We can further rewrite 
\begin{align}
&P_c=\frac1n\sum_{i=0}^{n-1}C_i\left( \mathbf{a}^{[1]},...,\mathbf{a}^{[n-1]},\mathbf{\tilde a} \big| \mathbf{x}^{[1]},..., \mathbf{x}^{[n-1]}\right),  
\end{align}
where for all $i\in\{1,...,n-2\}$,
\begin{align}\nonumber
&C_i\left( \mathbf{a}^{[1]},...,\mathbf{a}^{[n-1]},\mathbf{\tilde a} \big| \mathbf{x}^{[1]},..., \mathbf{x}^{[n-1]}\right)\\&=\prod_{\ell=1}^{i}P\left(\mathbf{a}^{[\ell]}|\mathbf{x}^{[\ell]}\right)\ \prod_{\ell=i+1}^{n-1} P_*\left(\mathbf{a}^{[\ell]}|\mathbf{x}^{[\ell]}\right)\prod_{k=1}^K \delta_{i,\tilde a_k},
\end{align}
while for $i=0$ and $i=n-1$ we have 
\begin{align}\nonumber
C_0&\left( \mathbf{a}^{[1]},...,\mathbf{a}^{[n-1]},\mathbf{\tilde a} \big| \mathbf{x}^{[1]},..., \mathbf{x}^{[n-1]}\right)\\&\quad \quad \quad\quad \quad \quad\quad \quad \quad=\prod_{\ell=1}^{n-1}P_*\left(\mathbf{a}^{[\ell]}|\mathbf{x}^{[\ell]}\right)\prod_{k=1}^K \delta_{0,\tilde a_k}
\end{align}
and 
\begin{align}\nonumber
&C_{n-1}\left( \mathbf{a}^{[1]},...,\mathbf{a}^{[n-1]},\mathbf{\tilde a} \big| \mathbf{x}^{[1]},..., \mathbf{x}^{[n-1]}\right)\\&\quad \quad \quad\quad \quad \quad\quad \quad \quad=\prod_{\ell=1}^{n-1}P\left(\mathbf{a}^{[\ell]}|\mathbf{x}^{[\ell]}\right)\prod_{k=1}^K \delta_{n-1,\tilde a_k}
\end{align}
respectively.

Let us consider a classical local processing $\mathcal{G}$ that creates a larger total box defined as
\begin{align}    \nonumber&\mathcal{G}\left(P_{\mathrm{init}}\left(\mathbf{a}^{[0]},...,\mathbf{a}^{[n-1]},\mathbf{\tilde a}\big| \mathbf{x}^{[0]},...,\mathbf{x}^{[n-1]}\right)\right)\\&
    =P_{\mathrm{tot}}\left(\mathbf{a}_{\mathrm{tar}}, \mathbf{a}^{[1]}_{\mathrm{cat}},...,\mathbf{a}^{[n-1]}_{\mathrm{cat}},\mathbf{\tilde a}' \big|\mathbf{x}_{\mathrm{tar}}, \mathbf{x}^{[1]}_{\mathrm{cat}},...,\mathbf{x}^{[n-1]}_{\mathrm{cat}} \right)
\end{align}
where $\mathbf{a}_{\mathrm{tar}} = (a_{1[\mathrm{tar}]},...,a_{K[\mathrm{tar}]})$ and $\mathbf{x}_{\mathrm{tar}} = (x_{1[\mathrm{tar}]},...,x_{K[\mathrm{tar}]})$ and for all $s\in\{1,...,n-1\}$:

\begin{equation}
\mathbf{a}^{[s]}_{\mathrm{cat}} = (a^{[s]}_{1[\mathrm{cat}]},...,a^{[s]}_{K[\mathrm{cat}]}),
\quad
\mathbf{x}^{[s]}_{\mathrm{cat}} = (x^{[s]}_{1[\mathrm{cat}]},...,x^{[s]}_{K[\mathrm{cat}]}),
\end{equation}
such that $a_{k[\mathrm{tar}]},a^{[s]}_{k[\mathrm{cat}]}\in \mathcal{A}_k$ and $x_{k[\mathrm{tar}]},x^{[s]}_{k[\mathrm{cat}]}\in \mathcal{X}_k$ and $\mathbf{\tilde a}'=(\tilde a_{1}',...,\tilde a_{K}')$ such that $\tilde a_{k}'\in\{0,...,n-1\}$. 

Now, to construct $\mathcal{G}$, let each party read their local value $\tilde a_k = i$. Notice that the parties either observe $\tilde a_1=\cdots=\tilde a_K=i\in\{0,\ldots,n-2\}$ or $\tilde a_1=\cdots=\tilde a_K=i=n-1$. 

If the parties observe $\tilde a_1=\cdots=\tilde a_K=i\in\{0,\ldots,n-2\}$ they know that over their $n$ distributions there are $i+1<n$ copies of $P$. Therefore, they set the target outputs deterministically as
\begin{align}
\mathbf{a}_{\mathrm{tar}}=\left(f_1\left(x_{1[\mathrm{tar}]}\right),...,f_K\left(x_{K[\mathrm{tar}]}\right)\right),
\end{align}
where $f_k:\mathcal{X}_i \rightarrow \mathcal{A}_i$ are local response functions such that $P_*(a_1,...,a_K|x_1,...,x_K)=\prod_{k=1}^{K}\delta_{a_k,f_k(x_k)}$. Hence, in this case the target inputs $\mathbf{x}_{\mathrm{tar}}$ and outputs $\mathbf{a}_{\mathrm{tar}}$ follow the distribution $P_*$. For the catalyst inputs and outputs they follow a strategy to partially recover the catalyst as they set
\begin{align}\nonumber
&\mathbf{x}^{[0]}=\mathbf{x}_{\mathrm{cat}}^{[i+1]},\mathbf{a}_{\mathrm{cat}}^{[i+1]}=\mathbf{a}^{[0]},\\\nonumber& \mathbf{x}^{[s]}=\mathbf{x}_{\mathrm{cat}}^{[s]}\text{ and }\mathbf{a}_{\mathrm{cat}}^{[s]}=\mathbf{a}^{[s]}\ \ \forall s\in\{1,\ldots,n-1\}\setminus\{i+1\} \\&\tilde a_k'=i+1\ \ \forall k.
\label{B1}
\end{align}

To see how this strategy works let $C_i'\left( \mathbf{a}^{[1]}_{\mathrm{cat}},...,\mathbf{a}^{[n-1]}_{\mathrm{cat}},\mathbf{\tilde a}' \big| \mathbf{x}^{[1]}_{\mathrm{cat}},...,\mathbf{x}^{[n-1]}_{\mathrm{cat}}\right)$ denote the returned-catalyst produced when the input catalyst is $C_i\left( \mathbf{a}^{[1]},...,\mathbf{a}^{[n-1]},\mathbf{\tilde a} \big| \mathbf{x}^{[1]},..., \mathbf{x}^{[n-1]}\right)$. Then under $C_i$ the catalyst distributions $[1],\ldots,[i]$ are distributed as $P$ and the distributions $[i+1],\ldots,[n-1]$ are distributed as $P_*$. The rule \eqref{B1} replaces the $i+1$-th distribution by the $0$-th, which is $P$, and leaves all others unchanged. If further sets $\tilde a'=i+1$, hence
\begin{align}\nonumber
&C_i'\left( \mathbf{a}^{[1]}_{\mathrm{cat}},...,\mathbf{a}^{[n-1]}_{\mathrm{cat}},\mathbf{\tilde a}' \big| \mathbf{x}^{[1]}_{\mathrm{cat}},...,\mathbf{x}^{[n-1]}_{\mathrm{cat}}\right)\\&=C_{i+1}\left( \mathbf{a}^{[1]}_{\mathrm{cat}},...,\mathbf{a}^{[n-1]}_{\mathrm{cat}},\mathbf{\tilde a}' \big| \mathbf{x}^{[1]}_{\mathrm{cat}},...,\mathbf{x}^{[n-1]}_{\mathrm{cat}}\right).
\end{align}

Therefore, in this case the parties obtain the distribution

\begin{align}\nonumber
    &P_{i<n-1}\left(\mathbf{a}_{\mathrm{tar}}, \mathbf{a}^{[1]}_{\mathrm{cat}},...,\mathbf{a}^{[n-1]}_{\mathrm{cat}},\mathbf{\tilde a}' \big|\mathbf{x}_{\mathrm{tar}}, \mathbf{x}^{[1]}_{\mathrm{cat}},...,\mathbf{x}^{[n-1]}_{\mathrm{cat}} \right)
    \\\nonumber &= P_{*}\left(\mathbf{a}_{\mathrm{tar}}|\mathbf{x}_{\mathrm{tar}}\right) C_{i+1}\left(\mathbf{a}^{[1]}_{\mathrm{cat}},...,\mathbf{a}^{[n-1]}_{\mathrm{cat}},\mathbf{\tilde a}' \big| \mathbf{x}^{[1]}_{\mathrm{cat}},...,\mathbf{x}^{[n-1]}_{\mathrm{cat}}\right)
\end{align}

Now, let's look at the case where the parties observe $\tilde a_1=\cdots=\tilde a_K=i=n-1$. In this case, they know that over their $n$ distributions they hold exactly $n$ copies of the box $P$ that they can wire with $\mathcal{W}$. Therefore, they set $\mathbf{a}_{\mathrm{tar}}$ using the wiring $\mathcal{W}$ as

\begin{align}\nonumber
    &\mathcal{W}\left[P\left(\mathbf{a}^{[0]}|\mathbf{x}^{[0]}\right),P\left(\mathbf{a}^{[1]}|\mathbf{x}^{[1]}\right),...,P\left(\mathbf{a}^{[n-1]}|\mathbf{x}^{[n-1]}\right)\right] \\&= P'\left(\mathbf{a}_{\mathrm{tar}}|\mathbf{x}_{\mathrm{tar}}\right).
\end{align}
Hence, in this case the target inputs $\mathbf{x}_{\mathrm{tar}}$ and outputs $\mathbf{a}_{\mathrm{tar}}$ follow the distribution $P'$. For the catalyst, they set the outputs deterministically as

\begin{align}\nonumber
    \mathbf{a}_{\mathrm{cat}}^{\mathrm[s]} &= \left(f_1\left(x_{1[\mathrm{cat}]}^{[s]}\right),...,f_K\left(x_{K[\mathrm{cat}]}^{[s]}\right)\right)\quad \forall s\in\{1,...,n-1\},\\
    \tilde a_k'&=0\qquad \forall k. \label{B3}
\end{align}
Notice that this rule outputs $C_0$, hence
\begin{align}\nonumber
&C_{n-1}'\left( \mathbf{a}^{[1]}_{\mathrm{cat}},...,\mathbf{a}^{[n-1]}_{\mathrm{cat}},\mathbf{\tilde a}' \big| \mathbf{x}^{[1]}_{\mathrm{cat}},...,\mathbf{x}^{[n-1]}_{\mathrm{cat}}\right)\\&=C_{0}\left( \mathbf{a}^{[1]}_{\mathrm{cat}},...,\mathbf{a}^{[n-1]}_{\mathrm{cat}},\mathbf{\tilde a}' \big| \mathbf{x}^{[1]}_{\mathrm{cat}},...,\mathbf{x}^{[n-1]}_{\mathrm{cat}}\right).
\end{align}
Therefore, in this case they obtain the distribution
\begin{align}
    \nonumber&
    P_{i=n-1}\left(\mathbf{a}_{\mathrm{tar}}, \mathbf{a}^{[1]}_{\mathrm{cat}},...,\mathbf{a}^{[n-1]}_{\mathrm{cat}},\mathbf{\tilde a}' \big|\mathbf{x}_{\mathrm{tar}}, \mathbf{x}^{[1]}_{\mathrm{cat}},...,\mathbf{x}^{[n-1]}_{\mathrm{cat}} \right)\\&
    = P'\left(\mathbf{a}_{\mathrm{tar}}|\mathbf{x}_{\mathrm{tar}}\right) C_0\left(\mathbf{a}^{[1]}_{\mathrm{cat}},...,\mathbf{a}^{[n-1]}_{\mathrm{cat}},\mathbf{\tilde a}' \big| \mathbf{x}^{[1]}_{\mathrm{cat}},...,\mathbf{x}^{[n-1]}_{\mathrm{cat}}\right).
\end{align}
Since $P_c(\tilde a_1=\cdots=\tilde a_K=i)=1/n$ for all $i$, the total distribution becomes
\begin{widetext}
\begin{align}\nonumber
    P_{\mathrm{tot}} \left(\mathbf{a}_{\mathrm{tar}}, \mathbf{a}^{[1]}_{\mathrm{cat}},...,\mathbf{a}^{[n-1]}_{\mathrm{cat}},\mathbf{\tilde a}' \big|\mathbf{x}_{\mathrm{tar}}, \mathbf{x}^{[1]}_{\mathrm{cat}},...,\mathbf{x}^{[n-1]}_{\mathrm{cat}} \right)  
\nonumber
=& \frac{1}{n}P_*\left(\mathbf{a}_{\mathrm{tar}}|\mathbf{x}_{\mathrm{tar}}\right)\sum^{n-1}_{i=1}C_{i}\left( \mathbf{a}^{[1]}_{\mathrm{cat}},...,\mathbf{a}^{[n-1]}_{\mathrm{cat}},\mathbf{\tilde a}' \big| \mathbf{x}^{[1]}_{\mathrm{cat}},...,\mathbf{x}^{[n-1]}_{\mathrm{cat}}\right) \\
+&\frac{1}{n}P'\left(\mathbf{a}_{\mathrm{tar}}|\mathbf{x}_{\mathrm{tar}}\right)C_0\left( \mathbf{a}^{[1]}_{\mathrm{cat}},...,\mathbf{a}^{[n-1]}_{\mathrm{cat}},\mathbf{\tilde a}' \big| \mathbf{x}^{[1]}_{\mathrm{cat}},...,\mathbf{x}^{[n-1]}_{\mathrm{cat}}\right)
\end{align}
\end{widetext}

Taking the marginal of this distribution over $\mathbf{a}_{\mathrm{tar}}$ we get
\begin{align}\nonumber
    &P_{\mathrm{cat}}\left( \mathbf{a}^{[1]}_{\mathrm{cat}},...,\mathbf{a}^{[n-1]}_{\mathrm{cat}} \big| \mathbf{x}^{[1]}_{\mathrm{cat}},...,\mathbf{x}^{[n-1]}_{\mathrm{cat}}\right)\\& = \frac{1}{n}\sum_{i=0}^{n-1}C_i\left( \mathbf{a}^{[1]}_{\mathrm{cat}},...,\mathbf{a}^{[n-1]}_{\mathrm{cat}} \big| \mathbf{x}^{[1]}_{\mathrm{cat}},...,\mathbf{x}^{[n-1]}_{\mathrm{cat}}\right)
\end{align}
and hence the catalyst is recovered as $P_{\mathrm{cat}} = P_c$. Similarly, taking the marginal of this distribution with respect to $\mathbf{a}_{\mathrm{cat}}$ we get
\begin{equation}
    P_{\mathrm{tar}}\left(\mathbf{a}_{\mathrm{tar}}|\mathbf{x}_{\mathrm{tar}}\right) = \frac{1}{n}P'\left(\mathbf{a}_{\mathrm{tar}}|\mathbf{x}_{\mathrm{tar}}\right) + \frac{n-1}{n}P_*\left(\mathbf{a}_{\mathrm{tar}}|\mathbf{x}_{\mathrm{tar}}\right).
\end{equation}

Hence, $I^*(P_{\mathrm{tar}})=\frac{1}{n}I^*(P^\prime)$. Since $I^*(P^\prime)>0$ is given a priori we also have $I^*(P^{\prime}_{\mathrm{tar}})>0$ concluding the proof of Proposition-\ref{prop2}. Furthermore, if we choose the original distribution to be \eqref{hardy_fam} and the GMNL witness to be $I^*(P) = I_3(P)$, combining this with \eqref{r1ap} from Proposition-\ref{prop1} we get

\begin{equation}
    I_K(P^{\prime}_{\mathrm{tar}}) = \frac{n!}{n^{K}}p(t)^{n}>0.
\end{equation}

\section{Example of GMNL activation via catalysis for the distribution in Proposition-\ref{prop1}}
\label{app:new_wiring}

 We choose the original distribution to be \eqref{ex1} and the GMNL witness to be $I^*(P) = I_3(P)$ as given in \eqref{top_ineq}. Furthermore, we set the local deterministic distribution $P_*=P_0$ such that it outputs $1$ for any given combination of inputs. In this case, the original distribution between the parties is given by $P(a_1,a_2,a_3|x_1,x_2,x_3)$ and the catalyst is given by

\begin{align}\nonumber
    &P_c(a^c_1,a^c_2,a^c_3,\Tilde{a}_1,\Tilde{a}_2,\Tilde{a}_3|x^c_1,x^c_2,x^c_3) =\\ \nonumber &\frac{1}{2}P(a^c_1,a^c_2,a^c_3|x^c_1,x^c_2,x^c_3)\delta_{1,\Tilde{a}_1}\delta_{1,\Tilde{a}_2}\delta_{1,\Tilde{a}_3}\\&
    + \frac{1}{2}P_0(a^c_1,a^c_2,a^c_3|x^c_1,x^c_2,x^c_3)\delta_{0,\Tilde{a}_1}\delta_{0,\Tilde{a}_2}\delta_{0,\Tilde{a}_3}.
\end{align}
where the superscript $c$ denotes outputs of the catalyst and $(\tilde a_1,\tilde a_2, \tilde a_3)$ are outputs of the classical catalyst registers. Given $P$ and $P_c$, the parties perform classical and local operations to construct a new box 
\begin{equation}\nonumber
    P_{\mathrm{tot}}^{ \prime}(a^{\prime}_1,a^{\prime}_2,a^{\prime}_3,a^{c^\prime}_1,a^{c^\prime}_2,a^{c^\prime}_3,\Tilde{a}^\prime_1,\Tilde{a}^\prime_2,\Tilde{a}^\prime_3|x^{\prime}_1,x^{\prime}_2,x^{\prime}_3,x^{c^\prime}_1,x^{c^\prime}_2,x^{c^\prime}_3)
\end{equation}
such that the marginals of this distribution are given by
\begin{align}\nonumber
&P_{\mathrm{tar}}^{\prime}(a^{\prime}_1,a^{\prime}_2,a^{\prime}_3|x^{\prime}_1,x^{\prime}_2,x^{\prime}_3)=\sum_{\substack{a^{c^\prime}_1,a^{c^\prime}_2,a^{c^\prime}_3\\\Tilde{a}^\prime_1,\Tilde{a}^\prime_2,\Tilde{a}^\prime_3}} P_{\text{tot}}^{\prime}(...)
\\&P_c^{\prime}(a^{c^\prime}_1,a^{c^\prime}_2,a^{c^\prime}_3,\Tilde{a}^\prime_1,\Tilde{a}^\prime_2,\Tilde{a}^\prime_3|x^{c^\prime}_1,x^{c^\prime}_2,x^{c^\prime}_3)= \sum_{a^{\prime}_1,a^{\prime}_2,a^{\prime}_3}P_{\text{tot}}^{\prime}(...)
\end{align}

If $P^{\prime}_c$ is the same as the initial catalyst $P_c$ and if the target distribution $P^{\prime}_{\mathrm{tar}}$ is GMNL 
then we have GMNL superactivation for this example. These classical, local operations that the parties perform are the following:
\begin{widetext}
\centering
\begin{tabular}{|c|c|c|c|c|}
\hline
Party & Input registers & Target registers & Catalyst registers & $\begin{array}{c}
     \text{Classical catalyst} \\
     \text{registers} 
\end{array}$ \\
\hline
$A_2,A_3$ &
$\begin{array}{c}\\[-10pt]
     x_k=(\tilde a_k\oplus1)x_k^{c\prime} \oplus \tilde a_kx'_k \\[2pt]
     x^c_k= x'_k\\[1.5pt]
\end{array}$
&
$\begin{array}{c}
a_k'=(\tilde a_k\oplus1)a_k^c \oplus \tilde a_k a_ka_k^c 
\end{array}$ &
$\begin{array}{c}
a_k^{c\prime}=(\tilde a_k\oplus1)a_k \oplus \tilde a_k 
\end{array}$ &
$\tilde a_k'=\tilde a_k\oplus1$ \\
\hline
$A_1$ &
$\begin{array}{c}\\[-10pt]
     x^k_1=(\tilde a_1\oplus1)x_1^{c\prime k} \oplus \tilde a_1 x^{\prime k}_1 \\[2pt]
     x^{ck}_1= x^{\prime k}_1\\[1.5pt]
\end{array}$
&
$\begin{array}{c}
a_1^{\prime k}=(\tilde a_1\oplus1)a_1^{ck}\oplus \tilde a_1a_1^ka_1^{ck} 
\end{array}$ &
$\begin{array}{c}
a_1^{c\prime k}=(\tilde a_1\oplus1)a_1^k\oplus \tilde a_1 
\end{array}$ &
$\tilde a_1'=\tilde a_1\oplus1$ \\
\hline
\end{tabular}
\end{widetext}

 Intuitively, if the $k$-th party observes $\Tilde{a}_k=1$, they implement the following functionality (i): They use the inputs of the target box $x^\prime_k$ as the inputs of the initial box $x_k$ and of the catalyst $x^c_k$. They then add a post-processing that applies a classical AND-operation to the outputs of their initial box $a_k$ and of the catalyst $a^c_k$, which generates the target output $a^\prime_k$, (ii): they deterministically set the output $a^{c^\prime}_k=1$ and they prepare a classical register $\Tilde{a}^\prime_k = 0$.  If the parties observe $\Tilde{a}_k=0$ instead (i): they use the target inputs $x_k^\prime$ as inputs of the catalyst $x^c_k$ and set the outputs of the target box $a^\prime_k$ to be the outputs of the catalyst $a^c_k$, (ii): they use the inputs $x^{c^\prime}_k$ as inputs of the initial box $x_k$ and set the outputs of  $a^{c^\prime}_k$ to be the outputs of their initial box, $a_k$. Therefore, with probability $P(\tilde a_1 = \tilde a_2= \tilde a_3=1)=1/2$, they have created the new box with marginal target behaviour $P^{\prime}(a^{\prime}_1,a^{\prime}_2,a^{\prime}_3|x^{\prime}_1,x^{\prime}_2,x^{\prime}_3)$ from Proposition-\ref{prop1}, while the catalyst part follows the distribution $P_0(a^{c^\prime}_1,a^{c^\prime}_2,a^{c^\prime}_3|x^{c^\prime}_1,x^{c^\prime}_2,x^{c^\prime}_3)$ and $\delta_{0,\Tilde{a}_1}\delta_{0,\Tilde{a}_2}\delta_{0,\Tilde{a}_3}$. Similarly, with probability $P(\tilde a_1 = \tilde a_2= \tilde a_3=0)=1/2$, the target box has a distribution $P_0(a^{\prime}_1,a^{\prime}_2,a^{\prime}_3|x^{\prime}_1,x^{\prime}_2,x^{\prime}_3)$ and the catalyst's outputs follow the distribution $P(a^{c^\prime}_1,a^{c^\prime}_2,a^{c^\prime}_3|x^{c^\prime}_1,x^{c^\prime}_2,x^{c^\prime}_3)$ and $\delta_{1,\Tilde{a}_1}\delta_{1,\Tilde{a}_2}\delta_{1,\Tilde{a}_3}$. Thus, overall, the marginal on the catalyst part of the box is
\begin{align}\nonumber
    &P_c^{\prime}(a^{c^\prime}_1,a^{c^\prime}_2,a^{c^\prime}_3,\Tilde{a}^\prime_1,\Tilde{a}^\prime_2,\Tilde{a}^\prime_3|x^{c^\prime}_1,x^{c^\prime}_2,x^{c^\prime}_3)=\\&\nonumber\frac{1}{2}P_0(a^{c^\prime}_1,a^{c^\prime}_2,a^{c^\prime}_3|x^{c^\prime}_1,x^{c^\prime}_2,x^{c^\prime}_3)\delta_{0,\Tilde{a}_1}\delta_{0,\Tilde{a}_2}\delta_{0,\Tilde{a}_3}\\&+
    \frac{1}{2}P(a^{c^\prime}_1,a^{c^\prime}_2,a^{c^\prime}_3|x^{c^\prime}_1,x^{c^\prime}_2,x^{c^\prime}_3)\delta_{1,\Tilde{a}_1}\delta_{1,\Tilde{a}_2}\delta_{1,\Tilde{a}_3}
\end{align}
which is the same as the original catalyst, meaning that the catalysis requirement is satisfied. In addition, the resulting marginal on the target system is
\begin{align}\nonumber
    P^{\prime }_{\mathrm{tar}}(a^{\prime}_1,a^{\prime}_2,a^{\prime}_3|x^{\prime}_1,x^{\prime}_2,x^{\prime}_3)&=\frac{1}{2}P^{\prime}(a^{\prime}_1,a^{\prime}_2,a^{\prime}_3|x^{\prime}_1,x^{\prime}_2,x^{\prime}_3)\\&+\frac{1}{2}P_0(a^{\prime}_1,a^{\prime}_2,a^{\prime}_3|x^{\prime}_1,x^{\prime}_2,x^{\prime}_3).
\end{align}
We know from Proposition-\ref{prop1} that $I_3(P^\prime)=\frac{1}{2}p^2(t)>0$ and it is trivial to check that the deterministic distribution satisfies $I_3(P_0) = 0$. Hence, the GMNL witness for the target distribution is strictly positive as $I_3(P^{\prime}_{\mathrm{tar}})=\frac{1}{4}p^2(t)>0$. 

\end{document}